\documentclass{article}
\usepackage[utf8]{inputenc}
\usepackage{fullpage}
\usepackage{authblk}
\usepackage{amsmath,amsthm,amssymb}
\usepackage{color}

\newcommand\ignore[1]{}

\newtheorem{theorem}{Theorem}[section]
\newtheorem{definition}[theorem]{Definition}
\newtheorem{lemma}[theorem]{Lemma}
\newtheorem{proposition}[theorem]{Proposition}

\newtheorem{conjecture}[theorem]{Conjecture}

\DeclareMathOperator{\Inf}{Inf}
\DeclareMathOperator*{\EE}{\mathbb{E}}
\DeclareMathOperator{\Var}{Var}
\DeclareMathOperator{\sign}{sign}
\DeclareMathOperator{\Sym}{Sym}
\providecommand{\RR}{\mathbb{R}}
\providecommand{\NN}{\mathbb{N}}
\providecommand{\maxs}{\mathbb{S}}
\providecommand{\comp}[1]{\overline{#1}}

\def\blue{}
\newcommand*\samethanks[1][\value{footnote}]{\footnotemark[#1]}

\title{On the sum of the L1 influences of bounded functions}
\author{Yuval Filmus\thanks{Research conducted at the Simons Institute for the Theory of Computing during the 2013 fall semester on Real Analysis in Computer Science.}}
\affil{Institute for Advanced Study, Princeton, NJ}
\author{Hamed Hatami \samethanks \thanks{Research  supported in part by an NSERC, and an FQRNT grant.}}
\affil{McGill University, Montreal, QC}
\author{Nathan Keller\thanks{Research supported in part by I.S.F. grant 402/13 and by the
Alon fellowship.}}
\author{Noam Lifshitz}
\affil{Bar Ilan University, Israel}

\begin{document}

\maketitle

\begin{abstract}
Let $f\colon \{-1,1\}^n \to [-1,1]$ have degree $d$ as a multilinear polynomial. It is well-known that the total influence of $f$ is at most $d$. Aaronson and Ambainis asked whether the total $L_1$ influence of $f$ can also be bounded as a function of $d$. Ba\v{c}kurs and Bavarian answered this question in the affirmative, providing a bound of $O(d^3)$ for general functions and $O(d^2)$ for homogeneous functions. We improve on their results by providing a bound of $d^2$ for general functions and $O(d\log d)$ for homogeneous functions. In addition, we prove a bound of $d/(2 \pi)+o(d)$ for monotone functions, and provide a matching example.
\end{abstract}

\section{Introduction} \label{sec:introduction}

Let $f\colon \{-1,1\}^n \to \{-1,1\}$ be a Boolean function. The \emph{influence} of the $i$th variable is
\[
 \Inf_i[f] = \Pr_{x \sim \{-1,1\}^n} [f(x) \neq f(x \oplus e_i)],
\]
where $x \oplus e_i$ is obtained from $x$ by flipping the $i$th coordinate. The \emph{total influence} of the function is
\[
 \Inf[f] = \sum_{i=1}^n \Inf_i[f].
\]
We define $\deg f$ as the degree of the unique multilinear polynomial representing $f$. It is well-known that $\Inf[f] \leq \deg f$, and much of the usefulness of influence in the study of Boolean functions rests on this property.

The notion of influence can be extended in several ways to real-valued functions $f\colon \{-1,1\}^n \to \RR$. For each $p > 0$, one can define
\[
 \Inf^{(p)}_i[f] = \EE_{x \sim \{-1,1\}^n} \left[ \left|\frac{f(x)-f(x\oplus e_i)}{2}\right|^p \right], \quad \Inf^{(p)}[f] = \sum_{i=1}^n \Inf^{(p)}_i[f].
\]
When $f$ is Boolean, all these definitions agree with the original definition. It is well-known that $\Inf^{(2)}[f] \leq \deg f \cdot \Var[f] \leq \deg f \cdot \|f\|_\infty^2$. While studying the query complexity of partial functions, Aaronson and Ambainis~\cite{AA} asked whether $\Inf^{(1)}[f]$ can be bounded similarly. In other words, does every $f\colon \{-1,1\}^n \to [-1,1]$ of degree $d$ satisfy $\Inf^{(1)}[f] = O(d^{O(1)})$?

Ba\v{c}kurs and Bavarian~\cite{BB} answered this in the affirmative, showing that $\Inf^{(1)}[f] = O(d^3)$. When $f$ is homogeneous (that is, the unique multilinear polynomial representing $f$ is homogeneous), they obtain an improved bound $\Inf^{(1)}[f] = O(d^2)$.

\paragraph{Our results} Our main result is the bound {\blue $\Inf^{(p)}[f] \leq d^{3-p}$ for $1 \leq p \leq 2$, which implies (and follows from)} $\Inf^{(1)}[f] \leq d^2$. When $f$ is homogeneous, we are able to show that $\Inf^{(1)}[f] = O(d\log d)$. 
When $f$ is symmetric and $d \ll n^{1/2}$, we show that $\Inf^{(1)}[f] \leq d + o(d)$. Following Ba\v{c}kurs and Bavarian, we conjecture that the bound $\Inf^{(1)}[f] \leq d$ holds for \emph{all} functions $f\colon \{-1,1\}^n \to [-1,1]$ of degree $d$.

When $f$ is monotone, we show that $\Inf^{(1)}[f] \leq d/ (2\pi)+o(d)$ and provide a matching example, based on combinations of Jacobi polynomials. Note that even in the special case where $f$ is further assumed to be Boolean, our result improves the previously known bound of $\Inf[f] \leq \ln 2 \cdot \deg[f] (1+o(1))$ due to Scheder and Tan~\cite{ST13}. In this case a strong bound of $\Inf[f]  \le \sqrt{d}$ is conjectured by Gopalan and Servedio (See Conjecture~\ref{conj:GopalanServedio} below).

\paragraph{Background and applications} As mentioned above, the question $\Inf^{(1)}[f] \stackrel{?}{=} \deg^{O(1)} f \cdot \|f\|_\infty$ first appears in a paper of Aaronson and Ambainis~\cite{AA} which studies situations in which quantum algorithms can only be polynomially faster than classical algorithms. One conjecture they are interested in states that any problem with quantum query complexity $T$ can be approximately solved on most inputs by a classical algorithm that makes $T^{O(1)}$ queries. While unable to prove the conjecture, Aaronson and Ambainis reduce it to a conjecture on bounded polynomials, known henceforth as the Aaronson--Ambainis conjecture, which states that a degree $d$ polynomial $f$ satisfying $0 \leq f \leq 1$ on the cube $\{0,1\}^N$ has a variable whose influence is at least $\Inf^{(1)}_i[f] \geq (\Var^{(1)}[f]/d)^{O(1)}$, where $\Var^{(p)}[f] = \EE[|f-\EE f|^p]$. The original version of the reduction made implicit use of the bound $\Inf^{(1)}[f] = \deg^{O(1)} f \cdot \|f\|_\infty$,
as noticed by Ba\v{c}kurs. Prompted by this, Aaronson and Ambainis updated their paper~\cite{AA14} to use $\Inf^{(2)},\Var^{(2)}$ instead of $\Inf^{(1)},\Var^{(1)}$ (so that they could use the known bound $\Inf^{(2)}[f] \leq \deg f \cdot \|f\|_\infty^2$), and also showed that both formulations of their conjecture are equivalent. Separately, Ba\v{c}kurs and Bavarian~\cite{BB} managed to prove $\Inf^{(1)}[f] = O(\deg^3 f \cdot \|f\|_\infty)$, thus salvaging the original proof of Aaronson and Ambainis.

As an application of their result, Ba\v{c}kurs and Bavarian provide a simple proof of a theorem of Erd\H{o}s et al.~\cite{EGPS} regarding cuts in graphs. The theorem states that that a graph $G=(V,E)$ on $n$ vertices with density $\rho = |E|/\binom{n}{2}$ always has a cut $(S,\overline{S})$ satisfying $|E(S,\overline{S}) - \rho| = \Omega(\min(\rho,1-\rho)n^{3/2})$. The proof uses the bound $\Inf^{(1)}[f] = O(\deg^3 f \cdot \|f\|_\infty)$ for a quadratic polynomial $f$. Since the degree is constant, our improved bound only translates to an improved hidden constant in the statement of the Erd\H{o}s et al.\ result; indeed, the result is tight up to a constant, for example for random graphs.

{\blue Finally we make a simple observation that might be interesting to some readers: The bound on $\Inf^{(1)}[f]$ implies that if a function $f\colon \{-1,1\}^n \to [-1,1]$ of degree $d$ is invariant under some transitive group action then $\Var[f] \leq \frac{e^{O(d)}}{n}$.
This improves on the bound $\Var[f] \leq \frac{e^{O(d)}}{\sqrt{n}}$ that follows from a result of Dinur~et~al.~\cite{DFKO}.}

\paragraph{Paper organization} Section~\ref{sec:definitions} defines various notations used in the paper. Section~\ref{sec:upper-bounds} contains our upper bounds {\blue and an application}. Section~\ref{sec:tightness} describes several functions for which the conjectured bound $\Inf^{(1)}[f] \leq d$ is tight or almost tight. Section~\ref{sec:future} contains several conjectures which would result in improvements to our main theorems. We believe that these conjectures are interesting in their own right.

\paragraph{Acknowledgements} We thank Mohammad Bavarian, Guy Kindler, Oleksiy Klurman, Elchanan Mossel and Krzysztof Oleszkiewicz for helpful discussions.
This material is based upon work supported by the National Science Foundation under agreement No.~DMS-1128155. Any opinions, findings and conclusions or recommendations expressed in this material are those of the authors and do not necessarily reflect the views of the National Science Foundation.

\section{Definitions} \label{sec:definitions}

We use the notation $[n] = \{1,\ldots,n\}$. The complement of a set $S \subseteq [n]$ will be denoted $\comp{S} = [n] \setminus S$. Probabilities or expectations over $\{-1,1\}^n$ are always with respect to the uniform probability measure. The point $(1,\ldots,1) \in \{-1,1\}^n$ will be denoted $\mathbf{1}$. A point $(x_1,\ldots,x_n) \in \RR^n$ will be abbreviated by $\mathbf{x}$.

\paragraph{Functions} In this paper we consider functions $f\colon \{-1,1\}^n \to \RR$. A function $f$ is \emph{Boolean} if $f$ only attains the values $\pm 1$. We think of a function $f\colon \{-1,1\}^n \to \RR$ as having $n$ input variables $x_1,\ldots,x_n$ which are $\pm 1$-valued. Every such function has a unique expansion as a multilinear polynomial over the variables $x_1,\ldots,x_n$; this expansion is known as the \emph{Fourier expansion} of $f$. Each set $S \subseteq [n]$ corresponds to a multilinear monomial $\chi_S = \prod_{i \in S} x_i$ known as a \emph{Fourier character} or a \emph{Walsh function}. The coefficient of $\chi_S$ in the expansion of $f$ is known as the \emph{Fourier coefficient} $\hat{f}(S)$.

The \emph{degree} of $f$, denoted by $\deg f$, is the degree of its Fourier expansion. If all monomials appearing in the Fourier expansion of $f$ have the same degree, then $f$ is \emph{homogeneous}. If $f(\mathbf{x})$ depends only on $x_1+\cdots+x_n$ then $f$ is \emph{symmetric}. If for any $x,y$
such that $x_i \leq y_i$ for all $i$, we have $f(x) \leq f(y)$, then $f$ is \emph{monotone increasing}.

\paragraph{Symmetrization} For $f\colon \{-1,1\}^n \to \RR$, we define the \emph{symmetrization} of $f$ as
\[
\Sym(f)(x_1,x_2,\ldots,x_n) = \EE_{\sigma \in S_n} [f(x_{\sigma(1)},x_{\sigma(2)},\ldots,x_{\sigma(n)})].
\]
Similarly, for any $m \geq n$, the $m$-coordinate symmetrization of $f$, $\Sym_m(f)\colon\{-1,1\}^m \rightarrow \mathbb{R}$, is the symmetrization of the
function $\tilde{f}\colon\{-1,1\}^m \rightarrow \mathbb{R}$ defined as $\tilde{f}(x_1,\ldots,x_m)=f(x_1,\ldots,x_n)$.
As $\Sym_m(f)$ is obtained from $f$ by averaging, it is clear that $\|\Sym_m(f)\|_\infty \leq \|f\|_\infty$, and that $\deg \Sym_m f \leq \deg f$.

\paragraph{Influence} For $x \in \{-1,1\}^n$, we define $x \oplus e_i$ as the vector obtained from $x$ by flipping the $i$th coordinate. For a function $f\colon \{-1,1\}^n \to \RR$ and $i \in [n]$, we define
\[ f_i(\mathbf{x}) = \frac{f(\mathbf{x}) - f(\mathbf{x}\oplus e_i)}{2} = \sum_{\substack{S \subseteq [n] \\ i \in S}} \hat{f}(S) \chi_S = x_i \frac{\partial f}{\partial x_i} (\mathbf{x}). \]
The $i$th \emph{influence} of $f$ is $\Inf_i[f] = \|f_i\|_1$ (in the introduction, we denoted this quantity by $\Inf^{(1)}_i[f]$, but for brevity we remove the superscript in the rest of the paper). The \emph{total influence} of $f$ is $\Inf[f] = \sum_{i=1}^n \Inf_i[f]$. Alternatively, if we define
\[ \Delta(f)(\mathbf{x}) = \sum_{i=1}^n |f_i(\mathbf{x})|, \]
then $\Inf[f] = \|\Delta(f)\|_1$. When $f$ is Boolean, $\Delta(f)(\mathbf{x})$ is the \emph{sensitivity} of $f$ at $\mathbf{x}$, which is the number of indices $i \in [n]$ such that $f(\mathbf{x} \oplus e_i) \neq f(\mathbf{x})$. The quantity $\Inf[f]$ is also known as the \emph{average sensitivity} of $f$, and $\maxs(f) = \|\Delta(f)\|_\infty$ is also known as the \emph{maximum sensitivity} of $f$.

\paragraph{Noise} For $f\colon \{-1,1\}^n \to \RR$ and $x \in \RR$, the noise operator $T_\rho$ takes the function $f$ to the function $T_\rho f$ given by
\[ T_\rho f = \sum_{S \subseteq [n]} \rho^{|S|} \hat{f}(S) \chi_S. \]
When $|\rho| \leq 1$, the noise operator has the following alternative interpretation. Fix a point $\mathbf{x} \in \{-1,1\}^n$. For each $i \in [n]$, independently let $y_i$ be the unique $\pm 1$-valued random variable such that $\EE[x_i y_i] = \rho$. Then
\[ T_\rho f(\mathbf{x}) = \EE_{\mathbf{y}} [f(\mathbf{y})]. \]

\paragraph{Chebyshev polynomials} For each $d \geq 0$, the Chebyshev polynomial (of the first kind) $T_d$ is the unique univariate polynomial such that $T_d(\cos \theta) = \cos (d\theta)$. The polynomial $T_d$ has degree $d$, and is given by the recurrence $T_{d+1}(x) = 2xT_d(x) - T_{d-1}(x)$ with base cases $T_0(x) = 1$ and $T_1(x) = x$. 

\paragraph{Jacobi polynomials} The Jacobi polynomials $J_d^{\alpha,\beta}$ are a family of polynomials that are orthogonal with respect to the weight function $(1-x)^\alpha (1+x)^\beta$ on $[-1,1]$. For all $\alpha,\beta>-1$, $J_d^{\alpha,\beta}\colon [-1,1] \to \RR$ is a degree $d$ polynomial given by
\[
J_d^{\alpha,\beta}(x) = 2^{-d} \sum_{j=0}^d {{d+\alpha}\choose{j}} {{d+\beta}\choose{d-j}} (x-1)^{d-j} (x+1)^j.
\]
The Chebychev polynomial $T_d$ is equal (up to normalization) to the Jacobi polynomial $J_d^{-1/2,-1/2}$.

\section{Upper bounds} \label{sec:upper-bounds}

In this section we assume that $f\colon \{-1,1\}^n \to [-1,1]$ has degree $d$. We prove the following upper bounds on the total influence:
\begin{enumerate}
 \item $\Inf[f] \leq d^2${\blue, and more generally $\Inf^{(p)}[f] \leq d^{3-p}$ for $1 \leq p \leq 2$}.
 \item If $f$ is homogeneous then $\Inf[f] = O(d\log d)$.
 \item If $f$ is symmetric and $d \ll n^{1/2}$ then $\Inf[f] \leq d + o(d)$.
 \item If $f$ is monotone then $\Inf[f] \leq d/(2\pi) + o(d)$.
\end{enumerate}

{\blue As an application, we prove that if $f$ is invariant under some transitive group action then $\Var[f] \leq \Inf^{(2)}[f] \leq \frac{e^{O(d)}}{n}$.}

\subsection{Upper bound for general functions}
\label{sec:sub:general-bound}

The upper bound $d^2$ for general functions uses a Bernstein--Markov type inequality.   The classical Bernstein--Markov theorem provides an upper bound on the derivative of a polynomial that is bounded in an interval.

\begin{proposition}[Bernstein--Markov] \label{pro:bernstein}
Let $p\colon [-1,1] \to \mathbb{R}$ be a polynomial of degree $d$. For every $x \in [-1,1]$,
\[ p'(x) \leq \min \left(d^2, \frac{d}{\sqrt{1-x^2}}\right) \|p\|_\infty. \]
\end{proposition}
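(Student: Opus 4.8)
The bound asserted is exactly the conjunction of two classical inequalities: Bernstein's inequality, $|p'(x)| \le \frac{d}{\sqrt{1-x^2}}\,\|p\|_\infty$ for $x\in(-1,1)$, and Markov's inequality, $\max_{[-1,1]}|p'| \le d^2\,\|p\|_\infty$ (at $x=\pm1$ the first factor is infinite, so only the second is being claimed there). So the plan is to prove both, and in fact both can be extracted from a single statement about trigonometric polynomials.

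First I would pass to the trigonometric setting. Writing $x=\cos\theta$ and $g(\theta)=p(\cos\theta)$, the function $g$ is a real trigonometric polynomial of degree at most $d$ with $\|g\|_\infty=\|p\|_\infty$, and $g'(\theta)=-\sin\theta\,p'(\cos\theta)$. The key lemma is the trigonometric Bernstein inequality: every real trigonometric polynomial $g$ of degree $d$ satisfies $\|g'\|_\infty\le d\,\|g\|_\infty$. I would prove this via the Riesz interpolation formula, which expresses the derivative as a fixed linear combination of sample values, $g'(\theta)=\frac{1}{4d}\sum_{k=1}^{2d}(-1)^{k-1}\csc^2\!\big(\tfrac{(2k-1)\pi}{4d}\big)\,g\big(\theta+\tfrac{(2k-1)\pi}{2d}\big)$; bounding term by term and using the identity $\sum_{k=1}^{2d}\csc^2\!\big(\tfrac{(2k-1)\pi}{4d}\big)=4d^2$ gives $\|g'\|_\infty\le d\,\|g\|_\infty$. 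Substituting back, $\sqrt{1-x^2}\,|p'(x)|=|g'(\theta)|\le d\,\|p\|_\infty$, which is Bernstein's inequality.

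For Markov's inequality, observe first that wherever $\sqrt{1-x^2}\ge 1/d$ the inequality just proved already gives $|p'(x)|\le d^2\,\|p\|_\infty$, so only a neighbourhood of the endpoints remains; replacing $p(x)$ by $p(-x)$ if necessary, we may work near $x=1$, i.e. for small $\theta$. There $g$ is an even trigonometric polynomial, so $g'(0)=0$ and $p'(1)=-g''(0)$; applying the trigonometric Bernstein inequality twice yields $|p'(1)|=|g''(0)|\le\|g''\|_\infty\le d\,\|g'\|_\infty\le d^2\,\|g\|_\infty=d^2\,\|p\|_\infty$, and, using $g'(\theta)=\int_0^\theta g''$, the same two-fold estimate bounds $|p'(\cos\theta)|=|g'(\theta)|/\sin\theta\le\tfrac{\theta}{\sin\theta}\,d^2\|p\|_\infty$ on the leftover range $\sin\theta<1/d$, which is $(d^2+O(1))\|p\|_\infty$. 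To remove the additive $O(1)$ and get the clean constant $d^2$ one invokes the extremal property of the Chebyshev polynomial $T_d$: among degree-$d$ polynomials bounded by $1$ on $[-1,1]$, $|p'|$ near the endpoints is maximised by $T_d$, with $T_d'(\pm1)=\pm d^2$; this is the standard equioscillation comparison — if $|p'(x_0)|>|T_d'(x_0)|$ at some $x_0$ near $1$, then a suitable combination $\lambda T_d-p$ has, by Rolle's theorem applied between the $d+1$ alternating extrema of $T_d$, more zeros on $[-1,1]$ than its degree permits.

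The main obstacle is the trigonometric Bernstein inequality itself: the reductions and the endpoint bookkeeping are routine, but $\|g'\|_\infty\le d\|g\|_\infty$ genuinely requires an idea (the Riesz interpolation formula, or an equivalent argument such as comparing $g$ with a shifted sine of the same degree), and pinning down the exact constant $d^2$ in Markov — rather than $d^2+O(1)$ — likewise needs the Chebyshev extremality argument rather than a soft estimate. Since the proposition is used only as a standard tool, an acceptable alternative is simply to cite Bernstein's and Markov's theorems; the sketch above records how a self-contained account would proceed.
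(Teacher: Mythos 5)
The paper offers no proof of this proposition: it is quoted as the classical Bernstein--Markov theorem (the authors prove nothing here and in fact rely on Sarantopoulos's Banach-space generalization, Proposition~\ref{pro:sar}, for Theorem~\ref{thm:d2}), so citing the classical literature --- as you note at the end --- is exactly what the paper does, and your sketch should be judged on its own terms as a reconstruction of the standard proof. On those terms it is essentially sound: the substitution $x=\cos\theta$, the Riesz interpolation formula with the identity $\sum_{k=1}^{2d}\csc^2\bigl(\tfrac{(2k-1)\pi}{4d}\bigr)=4d^2$, the consequence $\sqrt{1-x^2}\,|p'(x)|\le d\|p\|_\infty$, and the endpoint computation $p'(1)=-g''(0)$ with $|g''(0)|\le d^2\|g\|_\infty$ are all correct, and you honestly flag that the naive treatment of the remaining range $\sin\theta<1/d$ only yields $d^2+O(1)$.

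The one step that is looser than you suggest is the final equioscillation patch. Writing $h=\lambda T_d-p$ with $\lambda$ chosen so that $h'(x_0)=0$, the sign alternation at the $d+1$ extrema of $T_d$ gives $h$ exactly $d$ simple zeros and Rolle gives $h'$ exactly $d-1$ zeros between them; but the largest zero of $h$ lies in $(\cos(\pi/d),1)$ and may exceed $x_0$, so $x_0$ is not automatically an extra, distinct zero of the degree-$(d-1)$ polynomial $h'$, and the contradiction does not fall out of ``Rolle between the extrema'' alone --- this is precisely where the classical proofs (Rivlin, Natanson) spend their effort. A cleaner way to finish with the exact constant $d^2$ is Schur's inequality: if $q$ has degree at most $d-1$ and $\sqrt{1-x^2}\,|q(x)|\le 1$ on $[-1,1]$, then $\|q\|_\infty\le d$ (proved by Lagrange interpolation at Chebyshev nodes); applying this to $q=p'/(d\|p\|_\infty)$ together with your Bernstein step gives $\|p'\|_\infty\le d^2\|p\|_\infty$ with no endpoint bookkeeping. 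Also note the proposition as stated concerns $p'(x)$ rather than $|p'(x)|$, which is equivalent after replacing $p$ by $-p$, as you implicitly do.
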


The generalization that we will use, due to  Sarantopoulos~\cite{S}, extends Proposition~\ref{pro:bernstein} to Banach spaces.
Using the classical Bernstein--Markov theorem instead results in the slightly weaker upper bound $2d^2$.


Sarantopoulos's theorem concerns polynomials in general Banach spaces. Since in this paper we only need the finite dimensional case, to avoid introducing unnecessary terminology, we will state Sarantopoulos's theorem for the special case of finite dimensional Banach spaces. Recall that for a \emph{finite dimensional} Banach space $E=(\mathbb{R}^n,\|\cdot\|)$, the Fr\'echet derivative of a differentiable function $f\colon E \to \mathbb{R}$ at a point $\mathbf{x}$ is the linear operator $Df(\mathbf{x})\colon E \to \mathbb{R}$ defined as
\[ Df(\mathbf{x}) \colon \mathbf{y} \mapsto \sum_{i=1}^n y_i \frac{\partial f}{\partial x_i}(\mathbf{x}). \]
\begin{proposition}[{Sarantopoulos~\cite[Theorems 1 and 2]{S}}] \label{pro:sar}
 Let $E=(\mathbb{R}^n,\|\cdot\|)$ be a finite dimensional Banach space and $P\colon \mathbb{R}^n \to \RR$ be a  polynomial of degree $d$ satisfying $|P(\mathbf{x})| \leq 1$ for all $\|\mathbf{x}\| \leq 1$. Then $|DP(\mathbf{x})\mathbf{y}| \leq \min\left(d^2,\frac{d}{\sqrt{1-\|x\|^2}}\right)$ for all $\|\mathbf{x}\|,\|\mathbf{y}\| \leq 1$, where $DP$ is the Fr\'echet derivative of $P$.
\end{proposition}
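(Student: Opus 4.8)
The plan is to reduce this multivariate statement to the one-variable Bernstein and Markov inequalities by restricting $P$ to carefully chosen arcs that remain inside the unit ball, using only that a Banach-space ball is convex and balanced; finite dimensionality enters solely to guarantee that the relevant suprema are attained, and since the whole argument takes place in the two-dimensional subspace spanned by $\mathbf{x}$ and $\mathbf{y}$, nothing is lost. The one fact I would establish first is a Chebyshev-type extremal property: if $\deg P \le d$ and $|P|\le 1$ on the unit ball of $E$, then $|P(\mathbf{z})| \le T_d(\|\mathbf{z}\|)$ whenever $\|\mathbf{z}\| \ge 1$. This is immediate from restricting $P$ to the line $t \mapsto P(t\mathbf{z}/\|\mathbf{z}\|)$, which is a univariate polynomial of degree $\le d$ bounded by $1$ on $[-1,1]$: the scalar Chebyshev extremal inequality $|p(t)| \le |T_d(t)|$ for $|t|\ge 1$ then applies at $t=\|\mathbf{z}\|$.

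For the Bernstein branch, fix $\mathbf{x}$ with $r:=\|\mathbf{x}\|<1$ and $\mathbf{y}$ with $\|\mathbf{y}\|\le 1$, and set $\lambda=\sqrt{1-r^2}$. Consider the arc $\gamma(\theta)=\sin\theta\cdot\mathbf{x}-\cos\theta\cdot\lambda\mathbf{y}$. By the triangle inequality and then Cauchy--Schwarz, $\|\gamma(\theta)\| \le |\sin\theta|\,r+|\cos\theta|\,\lambda \le \sqrt{r^2+\lambda^2}=1$, so $\gamma$ stays in the unit ball and $v(\theta):=P(\gamma(\theta))$ is a trigonometric polynomial of degree $\le d$ with $\|v\|_\infty\le 1$. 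Since $\gamma(\pi/2)=\mathbf{x}$ and $\gamma'(\pi/2)=\lambda\mathbf{y}$, Bernstein's inequality for trigonometric polynomials gives $\lambda\,|DP(\mathbf{x})\mathbf{y}|=|v'(\pi/2)|\le d$, i.e.\ $|DP(\mathbf{x})\mathbf{y}| \le d/\sqrt{1-\|\mathbf{x}\|^2}$.

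For the Markov branch, with $\|\mathbf{x}\|,\|\mathbf{y}\|\le 1$, I would restrict $P$ to the straight line $q(t)=P(\mathbf{x}+t\mathbf{y})$. The triangle inequality gives $\|\mathbf{x}+t\mathbf{y}\|\le 1+|t|$, so the extremal property above yields $|q(t)| \le T_d(1+|t|)$ for all real $t$, while $\deg q\le d$ and $q'(0)=DP(\mathbf{x})\mathbf{y}$. It then remains to prove the purely one-dimensional statement: if $\deg q\le d$ and $|q(t)|\le T_d(1+|t|)$ on $\RR$, then $|q'(0)|\le T_d'(1)=d^2$. When $|q(0)|=1$ this is clean: $q$ touches the envelope $\pm T_d(1+|t|)$ at $t=0$, so comparing the one-sided derivatives of $q$ at $0$ with those of $t\mapsto T_d(1+t)$ and $t\mapsto T_d(1-t)$ forces $-d^2\le q'(0)\le d^2$. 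The remaining case $|q(0)|<1$ I would handle by a compactness-and-extreme-point argument: the admissible $q$'s form a compact convex set (the growth condition $|q(t)|\le T_d(1+|t|)\sim 2^{d-1}|t|^d$ bounds every coefficient), and an extremizer of the linear functional $q\mapsto q'(0)$ must be in contact with the envelope at enough points to be identified with $\pm T_d(1\pm t)$.

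I expect this last step -- the one-variable inequality with the sharp constant $d^2$ -- to be the main obstacle, since it is in essence the Markov brothers' inequality, whose endpoint behaviour is genuinely delicate. It is also precisely where a cruder route is wasteful: forcing an arc to stay in the ball by a substitution like $t\mapsto t^2$ doubles the degree and, after the chain rule, yields only the bound $2d^2$ mentioned in the remark above; the line-plus-extremal-property approach sketched here is what avoids that loss.
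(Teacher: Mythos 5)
A preliminary remark: the paper does not prove Proposition~\ref{pro:sar} at all --- it is imported verbatim from Sarantopoulos --- so your attempt can only be judged on its own terms, not against an internal proof. Your Bernstein branch is correct, and it is in fact the standard argument for that half of the theorem: the arc $\gamma(\theta)=\sin\theta\,\mathbf{x}-\cos\theta\,\sqrt{1-\|\mathbf{x}\|^2}\,\mathbf{y}$ stays in the unit ball, $P\circ\gamma$ is a trigonometric polynomial of degree at most $d$ bounded by $1$, and Bernstein's trigonometric inequality evaluated at $\theta=\pi/2$ gives $|DP(\mathbf{x})\mathbf{y}|\le d/\sqrt{1-\|\mathbf{x}\|^2}$. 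The Chebyshev-type growth estimate $|P(\mathbf{z})|\le T_d(\|\mathbf{z}\|)$ for $\|\mathbf{z}\|\ge 1$ is also correct.

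The Markov branch, however, has a genuine gap: the univariate lemma you reduce to is false. Take $d=2$, so $T_2(1+|t|)=2t^2+4|t|+1$, and let $q(t)=(4+2\sqrt{2})\,t$. Then $\deg q\le 2$ and $T_2(1+|t|)-|q(t)|=2t^2-2\sqrt{2}\,|t|+1=(\sqrt{2}\,|t|-1)^2\ge 0$, so $q$ satisfies your hypothesis $|q(t)|\le T_d(1+|t|)$ on all of $\RR$, yet $q'(0)=4+2\sqrt{2}>4=d^2$. In other words, the information that survives restricting $P$ to the line $\mathbf{x}+t\mathbf{y}$ and applying the growth estimate is strictly too weak to give the constant $d^2$; consequently your proposed compactness-and-extreme-point argument cannot succeed, since the extremal value of the relaxed problem genuinely exceeds $d^2$ and its extremizer is not $\pm T_d(1\pm t)$. (Your special case $|q(0)|=1$ is fine, but it is not where the difficulty lies.) The standard way to reach the sharp $d^2$ is different and short once the Bernstein branch is in hand: consider the univariate polynomial $r(t)=DP(t\mathbf{x})\mathbf{y}$, of degree at most $d-1$; since $\|t\mathbf{x}\|\le |t|$, the Bernstein branch gives $|r(t)|\le d/\sqrt{1-t^2}$ for $|t|<1$, and Schur's inequality (a polynomial of degree at most $d-1$ bounded by $(1-t^2)^{-1/2}$ on $(-1,1)$ is bounded by $d$ on $[-1,1]$) yields $|DP(\mathbf{x})\mathbf{y}|=|r(1)|\le d^2$. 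This is the classical derivation of Markov from Bernstein, transplanted along the ray $t\mathbf{x}$ rather than along the line $\mathbf{x}+t\mathbf{y}$.
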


\begin{theorem} \label{thm:d2}
 Let $f\colon \{-1,1\}^n \to [-1,1]$ be a function of degree $d$. Then
 \[ \Inf[f] \leq \|\Delta(f)\|_\infty \leq d^2. \]
\end{theorem}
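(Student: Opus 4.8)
\emph{Proof proposal.} The inequality $\Inf[f] = \|\Delta(f)\|_1 \le \|\Delta(f)\|_\infty$ is immediate, since $\{-1,1\}^n$ carries a probability measure; so the entire content of the statement is the pointwise bound $\Delta(f)(\mathbf{x}) \le d^2$ for every $\mathbf{x} \in \{-1,1\}^n$. The plan is to realize $\Delta(f)(\mathbf{x})$ as a single Fr\'echet directional derivative of the multilinear extension of $f$, and then apply Sarantopoulos's inequality (Proposition~\ref{pro:sar}) with a judicious choice of Banach norm.

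First I would set up the ambient space. Regard $f$ as its unique degree-$d$ multilinear polynomial and extend it to all of $\RR^n$. The crucial preliminary fact is that this extension still satisfies $|f(\mathbf{z})| \le 1$ for every $\mathbf{z} \in [-1,1]^n$: given such a $\mathbf{z}$, let $\mathbf{w}$ be a random $\pm 1$ vector with independent coordinates and $\EE[w_i] = z_i$; multilinearity gives $f(\mathbf{z}) = \EE_\mathbf{w}[f(\mathbf{w})]$, and since $|f| \le 1$ on $\{-1,1\}^n$ we get $|f(\mathbf{z})| \le 1$. Hence, taking $E = (\RR^n, \|\cdot\|_\infty)$ with $\|\mathbf{x}\|_\infty = \max_i |x_i|$ — whose unit ball is exactly the cube $[-1,1]^n$ — the polynomial $f$ meets the hypothesis of Proposition~\ref{pro:sar}, and its degree as a polynomial is exactly $d$.

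Next, fix $\mathbf{x} \in \{-1,1\}^n$, so that $\|\mathbf{x}\|_\infty = 1$, and choose the direction $\mathbf{y} \in \{-1,1\}^n$ by $y_i = \sign\!\big(\tfrac{\partial f}{\partial x_i}(\mathbf{x})\big)$ (breaking ties arbitrarily), so that $\|\mathbf{y}\|_\infty = 1$ as well. By the definition of the Fr\'echet derivative recalled before Proposition~\ref{pro:sar},
\[ Df(\mathbf{x})\mathbf{y} = \sum_{i=1}^n y_i \frac{\partial f}{\partial x_i}(\mathbf{x}) = \sum_{i=1}^n \left| \frac{\partial f}{\partial x_i}(\mathbf{x}) \right|. \]
On the other hand $f_i(\mathbf{x}) = x_i \frac{\partial f}{\partial x_i}(\mathbf{x})$ with $x_i \in \{-1,1\}$, so $|f_i(\mathbf{x})| = \big|\tfrac{\partial f}{\partial x_i}(\mathbf{x})\big|$ and therefore $\Delta(f)(\mathbf{x}) = \sum_i |f_i(\mathbf{x})| = Df(\mathbf{x})\mathbf{y}$. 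Since $\|\mathbf{x}\|_\infty = 1$, the quantity $\tfrac{d}{\sqrt{1-\|\mathbf{x}\|^2}}$ is infinite and Proposition~\ref{pro:sar} gives $\Delta(f)(\mathbf{x}) = Df(\mathbf{x})\mathbf{y} \le d^2$. This holds for every $\mathbf{x}$, so $\|\Delta(f)\|_\infty \le d^2$, and combining with the first paragraph finishes the proof.

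I do not expect a serious obstacle: the argument is short once the framework is in place. The one point that needs care — and the reason a Banach-space Bernstein--Markov inequality is genuinely invoked rather than the classical one-dimensional version — is the choice of norm. Using $\|\cdot\|_\infty$ makes \emph{every} vertex of the cube, and in particular the optimal sign-vector direction $\mathbf{y}$, lie in the unit ball, so that the whole of $\Delta(f)(\mathbf{x})$ is captured by a single evaluation of the derivative bound at the boundary point $\mathbf{x}$. A purely one-dimensional argument would instead restrict $f$ to a line segment inside the cube and lose a constant factor, yielding the weaker bound $2d^2$ mentioned in the excerpt. I would also verify the two small facts used implicitly — that the multilinear extension stays in $[-1,1]$ on $[-1,1]^n$ (done above) and that it has degree exactly $d$ (it is the Fourier expansion of $f$) — so that "$d$" in Proposition~\ref{pro:sar} is the same $d$.
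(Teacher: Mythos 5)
Your proposal is correct and follows essentially the same route as the paper: view $[-1,1]^n$ as the unit ball of $(\RR^n,\|\cdot\|_\infty)$, realize $\Delta(f)(\mathbf{x})$ as $Df(\mathbf{x})\mathbf{y}$ for a suitable sign vector $\mathbf{y}\in\{-1,1\}^n$, and apply Proposition~\ref{pro:sar}. The only difference is that you spell out two details the paper leaves implicit (the multilinear extension staying bounded on $[-1,1]^n$ and the explicit choice $y_i=\sign(\partial f/\partial x_i(\mathbf{x}))$), which is fine.
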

\begin{proof}
 Clearly $\Inf[f] = \|\Delta(f)\|_1 \leq \|\Delta(f)\|_\infty$, and so it suffices to show that $|\Delta(f)(\mathbf{x})| \leq d^2$ for all $\mathbf{x} \in \{-1,1\}^n$.
 Consider now $[-1,1]^n$ as the unit ball in the Banach space $(\RR^n, \|\mathord{\cdot}\|_\infty)$. The Fr\'echet derivative of $f$ at the point $\mathbf{x}$ is the linear operator $Df(\mathbf{x})$ given by
\[ Df(\mathbf{x})\mathbf{y} = \sum_{i=1}^n y_i \frac{\partial f}{\partial x_i}(\mathbf{x}) = \sum_{i=1}^n y_i \frac{f_i(\mathbf{x})}{x_i}. \]
 In particular, for every $\mathbf{x} \in \{-1,1\}^n$, there is some $\mathbf{y} \in \{-1,1\}^n$ such that
\[ \Delta(f)(\mathbf{x}) = Df(\mathbf{x})\mathbf{y}. \]
 Proposition~\ref{pro:sar} immediately implies that $|\Delta(f)(\mathbf{x})| \leq d^2$ for all $\mathbf{x} \in \{-1,1\}^n$.
\end{proof}

The argument in fact gives a bound on $\|\Delta(f)\|_\infty$, and in this respect, it is tight. Indeed, consider the functions $f_n(x_1,\ldots,x_n) = T_d(\frac{x_1+\cdots+x_n}{n})$. At the point $\mathbf{1}$ we have
\[
 \lim_{n \to \infty} \Delta(f_n)(\mathbf{1}) = \lim_{n \to \infty} n \left|\frac{T_d(1) - T_d(1-\tfrac{2}{n})}{2}\right|= T'_d(1) = d^2.
\]

{\blue
A simple application of H\"older's inequality allows us to interpolate between the bounds $\Inf^{(1)}[f] \leq d^2$ and $\Inf^{(2)}[f] \leq d$.

\begin{proposition}
Let $f\colon \{-1,1\}^n \to [-1,1]$ be a function of degree~$d$, and let $1<p<2$. Then $\Inf^{(p)}[f]\leq d^{3-p}$.
\end{proposition}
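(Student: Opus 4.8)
The plan is a two-step interpolation: first interpolate the pointwise/moment data coordinate by coordinate, and then interpolate the sum over coordinates. Recall that $\Inf^{(p)}[f] = \sum_{i=1}^n \EE_{\mathbf{x}}\!\left[|f_i(\mathbf{x})|^p\right]$, and that we already have the two endpoint bounds $\Inf^{(1)}[f] \leq d^2$ (Theorem~\ref{thm:d2}) and $\Inf^{(2)}[f] \leq \deg f \cdot \Var[f] \leq d$ (the classical $L_2$ bound quoted in the introduction).

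First I would fix $i$ and apply log-convexity of the moments $r \mapsto \EE[|f_i|^r]$ (equivalently, Lyapunov's inequality, which is itself a consequence of Hölder). Writing $p$ as the convex combination $p = (2-p)\cdot 1 + (p-1)\cdot 2$ of the exponents $1$ and $2$ — note $(2-p)+(p-1)=1$ and both weights lie in $(0,1)$ for $1<p<2$ — this yields
\[
 \EE\bigl[|f_i|^p\bigr] \leq \bigl(\EE[|f_i|]\bigr)^{2-p}\bigl(\EE[|f_i|^2]\bigr)^{p-1} = \bigl(\Inf^{(1)}_i[f]\bigr)^{2-p}\bigl(\Inf^{(2)}_i[f]\bigr)^{p-1}.
\]
Summing over $i$ and applying Hölder's inequality to the resulting sum, with the conjugate exponents $\tfrac{1}{2-p}$ and $\tfrac{1}{p-1}$ (again conjugate because $(2-p)+(p-1)=1$), gives
\[
 \Inf^{(p)}[f] \leq \Bigl(\sum_{i=1}^n \Inf^{(1)}_i[f]\Bigr)^{2-p}\Bigl(\sum_{i=1}^n \Inf^{(2)}_i[f]\Bigr)^{p-1} = \bigl(\Inf^{(1)}[f]\bigr)^{2-p}\bigl(\Inf^{(2)}[f]\bigr)^{p-1}.
\]

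Finally I would substitute the two endpoint bounds to conclude
\[
 \Inf^{(p)}[f] \leq (d^2)^{2-p}\cdot d^{\,p-1} = d^{\,2(2-p)+(p-1)} = d^{\,3-p},
\]
as claimed. There is no genuine obstacle here; the only points requiring care are purely bookkeeping: checking that the interpolation weights $2-p$ and $p-1$ are admissible (nonnegative, summing to $1$) for $1<p<2$, and making sure Hölder is applied in the direction that upper-bounds $\sum_i a_i^{2-p} b_i^{p-1}$ by $(\sum_i a_i)^{2-p}(\sum_i b_i)^{p-1}$. It is worth remarking that the endpoints $p=1$ and $p=2$ formally recover $\Inf^{(1)}[f]\le d^2$ and $\Inf^{(2)}[f]\le d$, so the statement is a clean interpolation between the two known bounds.
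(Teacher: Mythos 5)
Your proof is correct and follows essentially the same route as the paper: the pointwise step via Lyapunov/log-convexity of moments is exactly the paper's inner H\"older application with exponents $\tfrac{1}{2-p}$ and $\tfrac{1}{p-1}$, followed by the same outer H\"older on the sum over coordinates and the substitution of $\Inf^{(1)}[f]\leq d^2$ and $\Inf^{(2)}[f]\leq d$. Nothing further is needed.
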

\begin{proof}
By H\"{o}lder's inequality, applied with the conjugate norms $q=\frac{1}{2-p}, q'=\frac{1}{p-1}$, we have
\begin{align*}
\Inf^{(p)}[f]=\sum_{i\in[n]}\EE[|f_i|^{p}] & =\sum_{i\in[n]}\EE[|f_i|^{2-p}|f_i|^{2p-2}]\leq\sum_{i\in[n]}\left(\Inf_{i}^{(1)}[f]\right)^{2-p}
\left(\Inf_{i}^{(2)}[f]\right)^{p-1}.
\end{align*}
Applying  H\"{o}lder's inequality with the same norms, but now to the outer sum,  we get
\[
\sum_{i\in[n]}\left(\Inf_{i}^{(1)}[f]\right)^{2-p}
\left(\Inf_{i}^{(2)}[f]\right)^{p-1} \leq\left(\Inf^{(1)}[f]\right)^{2-p}\left(\Inf^{(2)}[f]\right)^{p-1}\leq d^{3-p}.
\]
This completes the proof.
\end{proof}

For $p \ge 2$ we obviously have $\Inf^{(p)}[f] \le \Inf^{(2)}[f] \le d$, which is sharp as Fourier characters of degree $d$ demonstrate.
}

\subsection{Upper bound for homogeneous functions}
\label{sec:sub:homogeneous-bound}

The upper bound $O(d\log d)$ for homogeneous functions uses a result of Harris~\cite{H2}.

\begin{proposition}[{Harris~\cite{H2}}] \label{pro:har}
 Let $h$ be a real polynomial satisfying $|h(\epsilon)| \leq (1+|\epsilon|)^d$ for all $\epsilon \in \RR$. Then $|h'(0)| = O(d\log d)$.
\end{proposition}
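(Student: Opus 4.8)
The plan is to convert the hypothesis into a Markov-type inequality on a fixed interval by a Möbius change of variable, and then to squeeze out the extra strength of the constraint that an unadorned Markov bound ignores.

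First I would record two reductions. The hypothesis forces $\deg h\le d$, since otherwise $|h(\epsilon)|/|\epsilon|^{\deg h}$ stays bounded below while $(1+|\epsilon|)^d/|\epsilon|^{\deg h}\to 0$; and one may assume $h$ is odd, because replacing $h(\epsilon)$ by $\tfrac12\bigl(h(\epsilon)-h(-\epsilon)\bigr)$ changes neither the hypothesis (the bound $(1+|\epsilon|)^d$ is even) nor $h'(0)$. Then I would introduce
\[ P(t)=(1+t)^d\,h\!\left(\frac{1-t}{1+t}\right), \]
which is an honest polynomial of degree $\le d$ precisely because $\deg h\le d$ clears the denominator. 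Writing $\epsilon=\tfrac{1-t}{1+t}$ and $1+\epsilon=\tfrac2{1+t}$, the hypothesis becomes
\[ |P(t)|\le\bigl(|1+t|+|1-t|\bigr)^d\qquad\text{for all }t\in\RR, \]
i.e. $|P(t)|\le 2^d$ on $[-1,1]$ and $|P(t)|\le(2|t|)^d$ for $|t|\ge 1$. Differentiating the identity $h(\epsilon)=2^{-d}(1+\epsilon)^d\,P\!\bigl(\tfrac{1-\epsilon}{1+\epsilon}\bigr)$ at $\epsilon=0$ yields the clean formula
\[ h'(0)=2^{-d}\bigl(d\,P(1)-2P'(1)\bigr), \]
so since $|P(1)|\le 2^d$ it suffices to prove $|P'(1)|\le O(d\log d)\cdot 2^d$. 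Equivalently, with $Q=2^{-d}P$, I must bound $|Q'(1)|$ for a degree-$\le d$ polynomial satisfying $|Q|\le 1$ on $[-1,1]$ and $|Q(t)|\le |t|^d$ for $|t|\ge 1$.

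This last step — an endpoint Markov-type inequality subject to the additional growth constraint — is the step I expect to be the real obstacle. Plain Markov only gives $|Q'(1)|\le d^2$, and that is sharp for polynomials merely bounded by $1$ on $[-1,1]$; the improvement has to come from $|Q(t)|\le |t|^d$ on $[1,\infty)$, which forbids Chebyshev-type oscillation near $t=1$ (a rescaled Chebyshev polynomial is bounded by $1$ on $[-1,1]$ but grows like $2^{d-1}t^d$, far above $t^d$). A cheap way to cash this in is Bernstein's inequality on the enlarged interval $[-1,1+\delta]$, on which $\|Q\|_\infty\le(1+\delta)^d$: since $t=1$ is now an interior point, the general-interval Bernstein bound gives $|Q'(1)|\le \tfrac{d}{\sqrt{2\delta}}(1+\delta)^d$, and optimizing over $\delta\approx\tfrac1{2d}$ yields $|Q'(1)|=O(d^{3/2})$, hence $|h'(0)|=O(d^{3/2})$.

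Closing the gap from $d^{3/2}$ to the claimed $O(d\log d)$ is Harris's contribution, and it must use the growth bound on the whole ray $[1,\infty)$ rather than on a short segment beyond $1$. The route I would attempt is to represent the functional $Q\mapsto Q'(1)$ as a combination $\sum_j c_j Q(t_j)$ of values at nodes $t_j$ adapted to the weight $\max(1,|t|)^d$ — Chebyshev-type nodes inside $[-1,1]$ together with a few points on $[1,\infty)$ — and then to estimate $\sum_j|c_j|\max(1,|t_j|)^d$; the logarithm should enter exactly as the Lebesgue-type constant of this interpolation scheme for the derivative functional. Alternatively one can invoke a weighted Bernstein–Markov inequality with weight $(1+|t|)^{-d}$. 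In any case, the reduction above isolates precisely the one-variable extremal problem that has to be solved, and localizes where the $\log d$ comes from.
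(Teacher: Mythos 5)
Your reductions are all correct, and they are worth something: the degree bound, the passage to the odd part, the substitution $P(t)=(1+t)^d h\bigl(\tfrac{1-t}{1+t}\bigr)$ with the resulting constraint $|P(t)|\le(|1+t|+|1-t|)^d$, the identity $h'(0)=2^{-d}\bigl(dP(1)-2P'(1)\bigr)$, and the Bernstein estimate on $[-1,1+\delta]$ giving $|h'(0)|=O(d^{3/2})$ are all sound (the last step is in spirit the same dilation trick the paper uses in Theorem~\ref{thm:d32}). In fact, after normalizing $Q=2^{-d}P$ your remaining task is to bound $|Q'(1)|$ for polynomials satisfying $|Q(t)|\le\max(1,|t|)^d$ on all of $\RR$ --- which is literally the constant $K_d$ of Definition~\ref{def:homogeneous}, so your argument shows Proposition~\ref{pro:har} is equivalent (up to the harmless $d\,|Q(1)|$ term) to $K_d=O(d\log d)$. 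That is a clean reformulation, but it is not a proof.

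The gap is that the one step carrying the entire content of the statement --- getting from $d^{3/2}$ down to $d\log d$ --- is never carried out. You offer two routes: an interpolation formula for the functional $Q\mapsto Q'(1)$ with nodes adapted to the weight $\max(1,|t|)^d$, with the logarithm hoped to appear as a Lebesgue-type constant; or ``invoke a weighted Bernstein--Markov inequality with weight $(1+|t|)^{-d}$''. The second is circular: such a weighted inequality \emph{is} the proposition. The first is only a plan: no nodes are specified, no coefficients are estimated, and no mechanism actually producing a $\log d$ (rather than, say, another power of $d$) is exhibited --- and since R\'ev\'esz and Sarantopoulos~\cite{RS} show $O(d\log d)$ is optimal here, the estimate has to be genuinely delicate. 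You say yourself that this step ``is Harris's contribution,'' and indeed it is: the paper does not prove Proposition~\ref{pro:har} at all but imports it from Harris~\cite{H2} (and treats the closely related asymptotics of $K_d$ as open, conjecturing $K_d=\Theta(d)$). So what you have is a correct reduction plus a correct weaker bound of $O(d^{3/2})$, not a proof of the stated $O(d\log d)$ bound.
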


We comment that R\'ev\'esz and Sarantopoulos~\cite{RS} show that the bound $O(d\log d)$ is optimal.

\begin{theorem} \label{thm:dlogd}
 Let $f\colon \{-1,1\}^n \to [-1,1]$ be a homogeneous function of degree $d$. Then
 \[ \Inf[f] \leq \|\Delta(f)\|_\infty \leq O(d\log d). \]
\end{theorem}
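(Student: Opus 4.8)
The plan is to fix a point $\mathbf{x} \in \{-1,1\}^n$, show $\Delta(f)(\mathbf{x}) \le O(d \log d)$, and conclude $\Inf[f] = \|\Delta(f)\|_1 \le \|\Delta(f)\|_\infty \le O(d\log d)$ exactly as in Theorem~\ref{thm:d2}. By symmetry under sign flips of coordinates it suffices to bound $\Delta(f)(\mathbf{1})$, which equals $\sum_{i=1}^n \frac{\partial f}{\partial x_i}(\mathbf{1})$ since at $\mathbf{1}$ each $x_i = 1$ and $f_i(\mathbf{1}) = \frac{\partial f}{\partial x_i}(\mathbf{1})$; moreover, by replacing each variable $x_i$ with $\epsilon_i x_i$ for suitable signs $\epsilon_i \in \{-1,1\}$ we may assume all these partials are nonnegative, so $\Delta(f)(\mathbf{1}) = \sum_i \frac{\partial f}{\partial x_i}(\mathbf{1})$. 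The natural move, as with the Bernstein--Markov argument, is to pass to a univariate restriction. Define the univariate polynomial $h(t) = f(t, t, \ldots, t)$ — or more precisely, since $f$ is only defined on the cube but extends to the multilinear polynomial on $\RR^n$, let $h$ be that extension evaluated on the diagonal. Then $h'(1) = \sum_i \frac{\partial f}{\partial x_i}(\mathbf{1}) = \Delta(f)(\mathbf{1})$, so it suffices to bound $h'(1)$.

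To invoke Harris's Proposition~\ref{pro:har} we need a bound of the form $|h(\epsilon)| \le (1+|\epsilon|)^d$ after an appropriate affine reparametrization sending the relevant point to $0$. Here is where homogeneity enters. Because $f$ is homogeneous of degree $d$, its multilinear extension satisfies $f(\lambda \mathbf{x}) = \lambda^d f(\mathbf{x})$ for all scalars $\lambda$. The key observation is that for $\mathbf{x} \in \{-1,1\}^n$ and any real $t$, the point $t\mathbf{x}$ has coordinates $\pm t$, and one can relate $f$ on such points to $f$ on the cube via an averaging/symmetrization identity. Concretely, I expect the right statement is: for the homogeneous $f$ with $\|f\|_\infty \le 1$ on the cube, the diagonal restriction satisfies a bound like $|h(1+\epsilon)| = |f(1+\epsilon, \ldots, 1+\epsilon)| \le (1 + |\epsilon|/1)^d \cdot (\text{something involving } \|f\|_\infty)$ — this follows because $f(1+\epsilon,\ldots,1+\epsilon)$ can be written as an average of $f$ evaluated at cube points scaled by factors that telescope into $(1+|\epsilon|)^d$. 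This is the point where one uses that each coordinate of $(1+\epsilon,\ldots,1+\epsilon)$, written in the $\pm 1$ basis, has the form $1 \cdot 1 + \epsilon$, and the multilinear expansion of a degree-$d$ polynomial in these shifted coordinates picks up at most $(1+|\epsilon|)^d$ in norm when $\|f\|_\infty \le 1$. Then apply Proposition~\ref{pro:har} to $g(\epsilon) := h(1+\epsilon)$, getting $|h'(1)| = |g'(0)| = O(d\log d)$.

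The main obstacle is establishing the precise bound $|h(1+\epsilon)| \le (1+|\epsilon|)^d$ (or its mild variant) — that is, correctly exploiting homogeneity to translate the cube bound $\|f\|_\infty \le 1$ into a growth bound on the diagonal restriction around the corner $\mathbf{1}$. A clean way to see it: write $h(1+\epsilon) = f(1+\epsilon, \ldots, 1+\epsilon)$; substituting $y_i = \frac{x_i + \epsilon}{1+|\epsilon|}$ type normalizations won't directly work since that leaves the cube, so instead I would use the multilinear expansion $f(\mathbf{z}) = \sum_{|S|=d} \hat f(S)\prod_{i\in S} z_i$ (homogeneity forces all $S$ to have size exactly $d$), evaluate at $z_i = 1+\epsilon$ to get $h(1+\epsilon) = (1+\epsilon)^d \sum_{|S|=d}\hat f(S) = (1+\epsilon)^d f(\mathbf{1})$, which is too weak — it only controls one monomial's worth. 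The correct approach must instead keep track of signs: restrict along the diagonal but with signs, i.e.\ consider $h_\epsilon(\mathbf{x}) = f((1+\epsilon)x_1, \ldots)$ and average over $\mathbf{x}$, or more likely use the substitution $x_i \mapsto x_i + \epsilon$ coordinate-wise inside the multilinear form and bound $\big|\sum_{|S|=d}\hat f(S)\prod_{i \in S}(x_i+\epsilon)\big|$ by expanding $\prod(x_i + \epsilon) = \sum_{T\subseteq S}\epsilon^{|S|-|T|}\prod_{i\in T}x_i$ and recognizing the result as $\EE$ of $f$ over a product distribution whose coordinates equal $x_i$ with the right weights — this gives $|h(1+\epsilon)| \le \|f\|_\infty (1+|\epsilon|)^d$ by total-variation/$\ell_1$ bookkeeping. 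Once that inequality is in hand the rest is immediate. I would also double-check the reduction to the corner $\mathbf{1}$ (the sign-flip symmetry of $\Delta$ and the fact that a global sign change of variables preserves homogeneity and the sup-norm) so that bounding $h'(1)$ genuinely bounds $\|\Delta(f)\|_\infty$.
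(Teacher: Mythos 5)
There is a genuine gap, and it sits exactly at the step your whole plan depends on: the claim that ``by replacing each variable $x_i$ with $\epsilon_i x_i$ for suitable signs we may assume all the partials at $\mathbf{1}$ are nonnegative.'' A sign change of variables moves the evaluation point: if $\tilde f(\mathbf{x}) = f(\epsilon_1 x_1,\ldots,\epsilon_n x_n)$, then $\frac{\partial \tilde f}{\partial x_i}(\mathbf{1}) = \epsilon_i \frac{\partial f}{\partial x_i}(\epsilon_1,\ldots,\epsilon_n)$, so after the substitution you are computing $\Delta(f)$ at the point $(\epsilon_1,\ldots,\epsilon_n)$, not at $\mathbf{1}$. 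Sign flips let you assume the maximizing point of $\Delta(f)$ is $\mathbf{1}$, but once that is done you have no remaining freedom to adjust the signs of the partials there. Without that (invalid) reduction the diagonal restriction is useless: by homogeneity $h(t)=f(t,\ldots,t)=t^d f(\mathbf{1})$, so $h'(1)=d\,f(\mathbf{1})$, which recovers only the signed sum $\sum_i \frac{\partial f}{\partial x_i}(\mathbf{1})$ with all its cancellations and can be far below $\Delta(f)(\mathbf{1})$. You noticed this yourself (``too weak''), but the remedies you sketch --- averaging over product distributions, ``total-variation bookkeeping'' --- never produce a usable inequality. The paper's own degree-$2$ example $f=\bigl(\frac{x_1+\cdots+x_n}{n}\bigr)^2-\bigl(\frac{x_{n+1}+\cdots+x_{2n}}{n}\bigr)^2$ makes the failure concrete: $\Delta(f)(\mathbf{1})$ is close to $4$, yet $f(t,\ldots,t)\equiv 0$, so no bound proved along the diagonal alone can control $\Delta(f)(\mathbf{1})$.

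The missing idea is a \emph{two}-variable restriction that separates the coordinates according to the sign of $f_i(\mathbf{1})$. Let $S=\{i: f_i(\mathbf{1})\ge 0\}$ and set $g(x,y)=f(x,\ldots,x,y,\ldots,y)$, with $x$ placed in the $S$-coordinates and $y$ in the $\comp{S}$-coordinates. Multilinearity extends the bound $|f|\le 1$ from $\{-1,1\}^n$ to the solid cube $[-1,1]^n$, and homogeneity then yields $|g(x,y)|\le \max(|x|,|y|)^d$ by rescaling; hence $h(\epsilon)=g(1+\epsilon,1-\epsilon)$ satisfies $|h(\epsilon)|\le (1+|\epsilon|)^d$, which is exactly the hypothesis of Proposition~\ref{pro:har}. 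Because the two groups are perturbed in \emph{opposite} directions, $h'(0)=\sum_{i\in S} f_i(\mathbf{1})-\sum_{i\in \comp{S}} f_i(\mathbf{1})=\Delta(f)(\mathbf{1})$, so Harris's bound $|h'(0)|=O(d\log d)$ gives the theorem. This antithetical perturbation is what legitimately absorbs the signs of the partials, replacing the step where your argument breaks; your growth estimate ``$|h(1+\epsilon)|\le(1+|\epsilon|)^d$'' is essentially correct in spirit, but it must be applied to this bivariate restriction, not to the diagonal.
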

\begin{proof}
 Since $\Inf[f] = \|\Delta(f)\|_1 \leq \|\Delta(f)\|_\infty$, it suffices to show that $|\Delta(f)(\mathbf{1})| \leq O(d\log d)$.
 Let $S$ be the set of $i \in [n]$ such that $f_i(\mathbf{1}) \geq 0$. Then
 \[
  |\Delta(f)(\mathbf{1})| = \sum_{i \in S} f_i(\mathbf{1}) - \sum_{i \in \comp{S}} f_i(\mathbf{1}).
 \]
 Define the bivariate polynomial $g(x,y) = f(\overbrace{x,\ldots,x}^{S},\overbrace{y,\ldots,y}^{\comp{S}})$. Since $f$ is multilinear, its extension to the continuous cube $[-1,1]^n$ is also bounded in absolute value by $1$. This, together with homogeneity of $f$, implies $|g(x,y)| \leq \max(|x|,|y|)^d$. In particular, the function $h(\epsilon) = g(1+\epsilon,1-\epsilon)$ is a polynomial satisfying $|h(\epsilon)| \leq \max(|1+\epsilon|,|1-\epsilon|)^d = (1+|\epsilon|)^d$. Proposition~\ref{pro:har} implies that $|h'(0)| = O(d\log d)$.  Now the theorem follows as
 \[
h'(0) = \sum_{i \in S} \frac{\partial f}{\partial x_i} \frac{d(1+\epsilon)}{d\epsilon}(0) + \sum_{i \in \bar{S}}
\frac{\partial f}{\partial x_i} \frac{d(1-\epsilon)}{d\epsilon}(0) =  \sum_{i \in S} f_i(\textbf{1}) - \sum_{i \in \bar{S}} f_i(\textbf{1}) = |\Delta(f)(\textbf{1})|. \qedhere
\]
\end{proof}

In Section~\ref{sec:future} we discuss a variant of this argument which could result in better bounds.

\medskip

\noindent When $f$ is not only homogeneous but also Boolean, we can determine both $\Inf[f]$ and
$\|\Delta(f)\|_\infty$ exactly.
\begin{proposition}\label{Prop:Hom-Bool}
Let $f\colon\{-1,1\}^n \rightarrow \{-1,1\}$ be a homogeneous Boolean function of degree $d$.
Then for any $x \in \{-1,1\}^n$, $\Delta(f)(x)=d$. In particular, $\Inf[f] =
\|\Delta(f)\|_\infty = d$.
\end{proposition}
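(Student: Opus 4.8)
The plan is to reduce the statement to two ingredients: Euler's identity for homogeneous multilinear polynomials, and the trivial observation that for a Boolean function all the discrete derivatives $f_i(\mathbf{x})$ at a fixed point $\mathbf{x}$ share the sign of $f(\mathbf{x})$. The second point is what makes the triangle inequality hidden in the definition of $\Delta(f)$ tight, and then the first point pins down the common value.

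First I would record the sign observation, which uses only that $f$ is Boolean (not homogeneity). Fix $\mathbf{x}\in\{-1,1\}^n$. Since $f(\mathbf{x}),f(\mathbf{x}\oplus e_i)\in\{-1,1\}$, the quantity $f_i(\mathbf{x})=\tfrac12\bigl(f(\mathbf{x})-f(\mathbf{x}\oplus e_i)\bigr)$ lies in $\{0,1\}$ when $f(\mathbf{x})=1$ and in $\{-1,0\}$ when $f(\mathbf{x})=-1$; in either case $f_i(\mathbf{x})$ is a non-negative multiple of $f(\mathbf{x})$. Hence all the $f_i(\mathbf{x})$ have a common sign, and $\Delta(f)(\mathbf{x})=\sum_{i}|f_i(\mathbf{x})|=\bigl|\sum_{i}f_i(\mathbf{x})\bigr|$.

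Next I would bring in homogeneity via Euler's identity. Writing $f=\sum_{|S|=d}\hat f(S)\chi_S$, we have $f_i(\mathbf{x})=x_i\frac{\partial f}{\partial x_i}(\mathbf{x})=\sum_{S\ni i}\hat f(S)\chi_S(\mathbf{x})$, so summing over $i$ counts each set $S$ exactly $|S|=d$ times: $\sum_{i=1}^{n}f_i(\mathbf{x})=\sum_{S}|S|\hat f(S)\chi_S(\mathbf{x})=d\sum_{|S|=d}\hat f(S)\chi_S(\mathbf{x})=d\,f(\mathbf{x})$. Combining with the previous step, $\Delta(f)(\mathbf{x})=|d\,f(\mathbf{x})|=d$ for every $\mathbf{x}\in\{-1,1\}^n$, and therefore $\Inf[f]=\|\Delta(f)\|_1=d$ and $\|\Delta(f)\|_\infty=d$.

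I do not expect a genuine obstacle here; the whole content is the remark that for Boolean $f$ the discrete derivatives at a point cannot point in opposite directions, so no cancellation occurs in $\sum_i f_i(\mathbf{x})$. If one wanted to avoid even invoking the sign structure for the upper bound, an alternative is to note $\Delta(f)(\mathbf{x})\ge\sum_i f_i(\mathbf{x})f(\mathbf{x})=d$ pointwise while $\EE_{\mathbf{x}}[\Delta(f)(\mathbf{x})]=\sum_i\|f_i\|_2^2=\sum_S|S|\hat f(S)^2=d$, which forces $\Delta(f)\equiv d$; but the direct sign argument above is shorter and also yields the pointwise claim without any averaging.
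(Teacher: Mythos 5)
Your proof is correct and is essentially the paper's own argument: the observation that for Boolean $f$ all discrete derivatives $f_i(\mathbf{x})$ at a fixed point share a sign (so no cancellation occurs), combined with the fact that summing $f_i$ over $i$ counts each degree-$d$ monomial exactly $d$ times, giving $\Delta(f)(\mathbf{x})=|d\,f(\mathbf{x})|=d$. The averaging alternative you sketch at the end also works (since $|f_i|=f_i^2$ for Boolean $f$), but the main argument matches the paper's proof step for step.
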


The simplest example of a homogeneous Boolean function is a Fourier character. Other examples are
discussed in Section~\ref{sec:tightness}.

\begin{proof}
Since $f$ is Boolean, for any $x$ and for any $i$, we have $f(x)-f(x \oplus e_i)
\in \{2f(x),0\}$. Thus, for a fixed $x$, all terms of the form $(f(x)-f(x \oplus e_i))/2$
have the same sign. Hence,
\begin{align*}
\Delta(f)(x) &= \sum_{i=1}^n \left| \frac{f(x)-f(x \oplus e_i)}{2} \right| = \left| \sum_{i=1}^n
\frac{f(x)-f(x \oplus e_i)}{2} \right| = \left| \sum_{i=1}^n \sum_{\{S:i \in S\}} \hat f(S) \chi_S(x) \right| \\
&= \left| \sum_{S \subset [n]} \sum_{i \in S} \hat f(S) \chi_S(x) \right| =
\left| \sum_{S \subset [n]} d \hat f(S) \chi_S(x) \right| = |df(x)|=d,
\end{align*}
the second to last equality using the homogeneity of $f$ and the last equality using the
Booleanity of $f$.
\end{proof}

We note that for bounded functions, the same proof can be applied to the local
extremum points, that is, to any $x_0$ such that either $f(x_0) \geq f(x_0 \oplus e_i)$
for all $i$ or $f(x_0) \leq f(x_0 \oplus e_i)$ for all $i$. For such points, the argument
implies $\Delta(f)(x_0) = d |f(x_0)|$.

The bound $\|\Delta(f)\|_\infty \le d$ of Proposition~\ref{Prop:Hom-Bool} does not necessarily hold for non-Boolean functions. Indeed, consider the function $f\colon \{-1,1\}^{2n} \to [-1,1]$ defined as
\[
f(x_1,\ldots,x_{2n}) = \left(\frac{\sum_{i=1}^n x_i}{n} \right)^2 - \left(\frac{\sum_{i=n+1}^{2n} x_i}{n} \right)^2 =
\frac{2}{n^2}\left(\sum_{1 \le i <j \le n} x_i x_j -\sum_{n <i <j \le 2n} x_i x_j \right).
\]
This is a homogeneous polynomial of degree $2$, but
\[ \Delta(f)(\mathbf{1})= 2n \times \frac{2}{n^2}(n-1) = 4\left(1-\frac{1}{n}\right), \]
which can be made arbitrarily close to $4$ by taking $n$ to be sufficiently large.

\subsection{Upper bounds for symmetric functions}
\label{sec:sub:symmetric}

We present two upper bounds for symmetric functions: a bound of $d+O(\frac{d^3+d\log dn}{n})$ for $d \ll n^{1/2}$ and a stronger bound of $d+O(\sqrt{dn}\exp(-n/d^4))$ for $d \ll n^{1/4}$. Both bounds use the classical Bernstein--Markov theorem on real polynomials (Proposition~\ref{pro:bernstein} above).

\begin{lemma}\label{lemma:symmetric-new}
Let $f\colon \left\{ -1,1\right\} ^{n}\rightarrow \left[-1,1\right]$ be a symmetric
function of degree $d$, where $n > d^{2}$. Then we can write
$f(\mathbf{x}) = p(\frac{x_1+\cdots+x_n}{n})$ for some polynomial
$p\colon [-1,1] \to \mathbb{R}$ of degree $d$, such that $\|p\|_\infty \leq \frac{n}{n-d^2}$.
\end{lemma}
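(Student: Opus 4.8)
The plan is to represent $f$ as a univariate polynomial evaluated on a fine grid inside $[-1,1]$, and then run a short bootstrapping argument based on the Bernstein--Markov inequality (Proposition~\ref{pro:bernstein}).

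First I would recall the standard fact that a symmetric multilinear polynomial of degree $d$ on $\{-1,1\}^n$ (with $n \ge d$) is precisely a polynomial of degree $d$ in the single quantity $s = x_1+\cdots+x_n$. To see this, let $e_k = \sum_{|T|=k} \chi_T$ be the $k$-th elementary symmetric Walsh polynomial; the symmetric functions of degree $\le d$ are exactly $\operatorname{span}\{1,e_1,\dots,e_d\}$. On the cube one has $(x_1+\cdots+x_n)^k = k!\,e_k + (\text{Fourier degree} < k)$, so by induction $\operatorname{span}\{1,s,\dots,s^d\} = \operatorname{span}\{1,e_1,\dots,e_d\}$. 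Hence $f(\mathbf{x}) = q(s)$ for some univariate $q$ with $\deg q \le d$, and since for polynomials of degree at most $n$ the Fourier degree of $q(s)$ equals $\deg q$, we in fact get $\deg q = d$. Setting $p(t) = q(nt)$ gives $f(\mathbf{x}) = p\!\left(\frac{x_1+\cdots+x_n}{n}\right)$ with $\deg p = d$.

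Next, as $\mathbf{x}$ ranges over $\{-1,1\}^n$ the argument $\frac{x_1+\cdots+x_n}{n}$ ranges over the grid $G = \{\,s/n : s \equiv n \pmod 2,\ |s| \le n\,\} \subseteq [-1,1]$, which has mesh $2/n$ and contains the endpoints $\pm 1$; since $f$ takes values in $[-1,1]$ we have $|p(t)| \le 1$ for every $t \in G$. Now let $M = \|p\|_\infty$, attained at some $x^\ast \in [-1,1]$, and choose $t \in G$ with $|x^\ast - t| \le 1/n$ (possible since the mesh is $2/n$ and $\pm 1 \in G$). By the mean value theorem together with Proposition~\ref{pro:bernstein} applied to $p$ and to $-p$,
\[
M = |p(x^\ast)| \le |p(t)| + |x^\ast - t|\cdot \|p'\|_\infty \le 1 + \frac{d^2}{n}\,M .
\]
Since $n > d^2$, rearranging yields $M \le \left(1 - d^2/n\right)^{-1} = \frac{n}{n-d^2}$, as desired.

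I do not expect a real obstacle. The one point that needs care is the degree bookkeeping in the first step — that rewriting the symmetric multilinear polynomial as a univariate polynomial in $s$ does not inflate the degree beyond $d$ — which is the classical fact recalled above. It is also worth being explicit that the grid has mesh exactly $2/n$ and contains $\pm 1$, so that every point of $[-1,1]$ is within $1/n$ of a grid point; this is precisely what produces the factor $d^2/n$ (rather than $2d^2/n$) and hence the clean bound $\frac{n}{n-d^2}$.
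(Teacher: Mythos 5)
Your proof is correct and follows essentially the same route as the paper's: write $f(\mathbf{x})=p\bigl(\tfrac{x_1+\cdots+x_n}{n}\bigr)$, note $|p|\le 1$ on the grid $\{-1+2i/n\}$ of mesh $2/n$, and bootstrap via the mean value theorem and the Markov bound $\|p'\|_\infty\le d^2\|p\|_\infty$ to get $\|p\|_\infty\le 1+\tfrac{d^2}{n}\|p\|_\infty$. The only difference is that you spell out the standard fact (which the paper dismisses as ``easy to see'') that a symmetric multilinear function of degree $d$ is a univariate degree-$d$ polynomial in $x_1+\cdots+x_n$, which is a welcome addition but not a deviation in method.
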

\begin{proof}
It is easy to see that $f$ can be written as $f(\mathbf{x}) = p(\frac{x_1+\cdots+x_n}{n})$
for a unique polynomial $p\colon [-1,1] \to \mathbb{R}$ of degree $d$. So we only need to find an
upper bound on $\|p\|_{\infty}$. Suppose $x \in \left[-1,1\right]$ is such that
$\left|p\left(x\right)\right|=\|p\|_{\infty}$. Choose $y=(-1+2i/n)$  with $i \in \mathbb{Z}$ such that $|x-y|$ is minimal. Clearly, $\left|x-y\right|\leq\frac{1}{n}$, and $|p(y)| \leq 1$ since $p$ agrees
with $f$ on $y$. By the Mean Value theorem, $\frac{p(x)-p(y)}{x-y} = p'(z)$ for some $z$ between $x$ and
$y$. Thus, by Proposition~\ref{pro:bernstein},
\[
\|p\|_{\infty}=|p(x)|\leq |p'(z)||y-x|+|p(y)|
\leq\frac{d^{2}\|p\|_{\infty}}{n}+1.
\]
The assertion follows.
\end{proof}

\begin{theorem} \label{thm:dsymmetric}
 Let $f\colon \{-1,1\}^n \to [-1,1]$ be a symmetric function of degree $d$, where $d < \sqrt{n}$. Then
 \[
 \Inf[f] \leq \frac{n}{n-d^2} \left(d + O\left(\frac{d\log (dn)}{n}\right) \right).
 \]
\end{theorem}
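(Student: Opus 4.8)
The plan is to reduce the total $L_1$ influence of a symmetric function to a problem about a single univariate polynomial $p$ of degree $d$, and then estimate the resulting quantity using the Bernstein--Markov inequality. By Lemma~\ref{lemma:symmetric-new}, we can write $f(\mathbf{x}) = p(\frac{x_1+\cdots+x_n}{n})$ for a polynomial $p$ of degree $d$ with $\|p\|_\infty \leq \frac{n}{n-d^2}$. Since $f$ is symmetric, $\Inf_i[f]$ is the same for all $i$, so $\Inf[f] = n \Inf_1[f] = n \, \EE_{\mathbf{x}}[|f_1(\mathbf{x})|]$, where $f_1(\mathbf{x}) = \frac{f(\mathbf{x}) - f(\mathbf{x}\oplus e_1)}{2}$. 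Flipping the first coordinate changes the sum $x_1+\cdots+x_n$ by $\pm 2$, so if we let $k = \frac{x_2+\cdots+x_n + (n-1)}{2}$ denote the number of $+1$'s among the last $n-1$ coordinates, then conditioned on $k$, the value of $f_1$ is $\pm\frac12\left(p\!\left(\frac{2k - n + 2}{n}\right) - p\!\left(\frac{2k-n}{n}\right)\right)$, i.e. it is (up to sign) a finite difference of $p$ at spacing $\frac{2}{n}$.

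The key computation is therefore to bound $\EE_{\mathbf{x}}[|f_1|] = \frac{1}{2} \, \EE_{k \sim \mathrm{Bin}(n-1,1/2)} \left| p\!\left(\tfrac{2k-n+2}{n}\right) - p\!\left(\tfrac{2k-n}{n}\right) \right|$. First I would replace the finite difference by a derivative: by the Mean Value Theorem this difference equals $\frac{2}{n} p'(\xi)$ for some $\xi$ in the relevant subinterval, so
\[
\Inf[f] = n \cdot \frac{1}{2} \cdot \frac{2}{n} \, \EE_k \left|p'(\xi_k)\right| = \EE_k\left|p'(\xi_k)\right|,
\]
where each $\xi_k \in [-1,1]$. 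Now apply the \emph{second} form of the Bernstein--Markov bound, $|p'(x)| \leq \frac{d}{\sqrt{1-x^2}}\|p\|_\infty$, which is the sharper estimate near the center of the interval where the binomial mass concentrates. So I need to show $\EE_k \frac{1}{\sqrt{1-\xi_k^2}}$ is at most $d/d = 1$ to leading order — more precisely, that $\EE_k \frac{1}{\sqrt{1-\xi_k^2}} \leq 1 + O\!\left(\frac{\log(dn)}{n}\right)$, which combined with $\|p\|_\infty \leq \frac{n}{n-d^2}$ and the factor $d$ yields the claimed bound.

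The main obstacle — and the only place real work is needed — is controlling $\EE_k \frac{1}{\sqrt{1-\xi_k^2}}$ where $\xi_k$ ranges over points of the form roughly $\frac{2k-n}{n}$ with $k \sim \mathrm{Bin}(n-1,1/2)$. Writing $\xi_k = \frac{2k-n}{n} =: t$, we have $1 - t^2 = \frac{(n-2k)(2k)+O(1)}{n^2}$ roughly, i.e. $\frac{1}{\sqrt{1-t^2}} \approx \frac{n}{2\sqrt{k(n-k)}}$. I would split the expectation into the bulk, where $|k - n/2| \leq C\sqrt{n\log n}$ (say) — here $\sqrt{k(n-k)} = \frac{n}{2}\sqrt{1 - (2k/n-1)^2} \geq \frac{n}{2}(1 - O(\tfrac{\log n}{n}))$, so each term contributes $\leq 1 + O(\tfrac{\log n}{n})$ — and a tail, where $|k-n/2| > C\sqrt{n\log n}$; by a Chernoff bound this event has probability $\leq (dn)^{-\omega(1)}$ for suitable $C$, while the integrand there is crudely bounded using $\frac{1}{\sqrt{1-t^2}} \leq \sqrt{n}$ (since the closest $\xi_k$ can get to $\pm 1$ is about $\frac{1}{n}$, giving $1-\xi_k^2 \gtrsim \frac1n$, hence the integrand is $O(\sqrt n)$). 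Thus the tail contributes $O(\sqrt{n} \cdot (dn)^{-\omega(1)}) = o(1/n)$, which is absorbed. One must be slightly careful that $\xi_k$ is not exactly $\frac{2k-n}{n}$ but lies in an interval of length $\frac{2}{n}$ around it; this only shifts the relevant $k$ by $O(1)$ and does not affect any estimate. Putting the bulk and tail together gives $\EE_k \frac{1}{\sqrt{1-\xi_k^2}} \leq 1 + O\!\left(\frac{\log(dn)}{n}\right)$, and multiplying by $d\|p\|_\infty \leq \frac{dn}{n-d^2}$ completes the proof.
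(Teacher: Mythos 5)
Your overall route is the same as the paper's: reduce to a univariate $p$ via Lemma~\ref{lemma:symmetric-new}, use symmetry to write $\Inf[f]$ as $n$ times one coordinate's influence, convert the finite difference of $p$ at spacing $2/n$ into $\frac{2}{n}p'(\xi)$ by the Mean Value Theorem, and then use the Bernstein form $|p'(x)|\le \frac{d}{\sqrt{1-x^2}}\|p\|_\infty$ in the bulk together with Hoeffding for the tail. However, your tail estimate contains a genuine flaw as written. You bound the integrand in the tail by $\frac{d}{\sqrt{1-\xi_k^2}}\le d\sqrt{n}$, justified by the claim that $\xi_k$ stays at distance roughly $\frac1n$ from $\pm1$. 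This is false at the extreme values of $k$: when, say, $x_2=\cdots=x_n=1$ (i.e.\ $k=n-1$), the finite difference is taken over the interval $[\frac{n-2}{n},1]$, and the Mean Value Theorem only guarantees some $\xi$ in that interval — it may lie arbitrarily close to $1$, where $\frac{1}{\sqrt{1-\xi^2}}$ is unbounded. Your remark that the shift of $\xi_k$ within an interval of length $\frac2n$ ``only shifts the relevant $k$ by $O(1)$'' does not save this: shifting $k=n-1$ by $1$ lands on $k=n$, where $k(n-k)=0$ and the bound $\frac{n}{2\sqrt{k(n-k)}}$ is vacuous. So, as stated, the tail contribution is not controlled at all at these boundary points, even though their probability is tiny, because you are multiplying a small probability by a bound that is not finite.

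The repair is immediate and is exactly what the paper does: Proposition~\ref{pro:bernstein} gives $|p'(x)|\le \min\bigl(d^2,\frac{d}{\sqrt{1-x^2}}\bigr)\|p\|_\infty$, and in the tail one should use the Markov bound $d^2\|p\|_\infty$ (which is uniform over all of $[-1,1]$, including the endpoints) rather than the Bernstein factor. With a threshold $T=\sqrt{n\log(dn)}$ (or your $C\sqrt{n\log n}$ with a fixed $C\ge 1$), Hoeffding gives tail probability at most $2e^{-2T^2/n}$, so the tail contributes $O(d^2 e^{-2T^2/n})\|p\|_\infty = O(1/n^2)\cdot\|p\|_\infty$, which is absorbed into the error term; the bulk computation you give is fine and matches the paper's. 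With this one correction your argument is the paper's proof.
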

\begin{proof}
 Theorem~\ref{thm:d2} allows us to assume that $d \geq 2$.
 By Lemma~\ref{lemma:symmetric-new}, we can write $f(\mathbf{x}) = p(\frac{x_1+\cdots+x_n}{n})$ for some polynomial
 $p\colon [-1,1] \to \mathbb{R}$ of degree $d$ with $\|p\|_\infty \leq \frac{n}{n-d^2}$. We can calculate explicitly
\begin{equation}\label{Eq:New-Symmetric1}
 \Inf[f] = n\Inf_n[f] = n\EE_{\mathbf{x} \in \{-1,1\}^{n-1}}\left[ \left|\frac{p(\tfrac{S+1}{n}) - p(\tfrac{S-1}{n})}{2}\right| \right], \text{ where } S = x_1 + \cdots + x_{n-1}.
\end{equation}
 The Mean Value theorem shows that for some $\theta_S \in [-1,1]$,
\[
 \frac{1}{2}|p(\tfrac{S+1}{n}) - p(\tfrac{S-1}{n})| = \frac{1}{n} |p'(\tfrac{S+\theta_S}{n})| \leq \frac{1}{n} \min \left(d^2, \frac{d}{\sqrt{1-(|S|+1)^2/n^2}}\right) \cdot \frac{n}{n-d^2},
\]
 using Proposition~\ref{pro:bernstein}. Let $T = \sqrt{n\log (dn)}$. Then
\begin{align*}
 \Inf[f] &= n\EE_{\mathbf{x} \in \{-1,1\}^{n-1}}\left[ \left|\frac{p(\tfrac{S+1}{n}) - p(\tfrac{S-1}{n})}{2}\right| \right] \\ &\leq
 \left(\frac{d}{\sqrt{1-(T+1)^2/n^2}} + d^2 \Pr[|S| > T] \right) \cdot \frac{n}{n-d^2} \\ &\leq
 \left( \frac{d}{\sqrt{1-O(\log (dn)/n)}} + 2d^2 e^{-2T^2/n} \right) \cdot \frac{n}{n-d^2} \\ &\leq \left(d + O\left(\frac{d\log (dn)}{n}\right) + \frac{2}{n^2} \right) \cdot \frac{n}{n-d^2} = \left( d + O\left(\frac{d \log (dn)}{n}\right) \right)\cdot \frac{n}{n-d^2},
\end{align*}
 using Hoeffding's bound in the second inequality.
\end{proof}

In the following, we prove a stronger bound, effective for $d \ll n^{1/4}$. We need two lemmas,
that may be of independent interest. The first lemma bounds the sum of first-level Fourier coefficients
of low-degree bounded functions.
\begin{lemma}
Let $f\colon \left\{ -1,1\right\} ^{n}\rightarrow \left[-1,1\right]$ be a function
of degree $d$. Then
\[
\mathbb{E}\left[\left(x_{1}+x_{2}+\cdots+x_{n}\right)f\left(x_{1},x_{2},\ldots,x_{n}\right)\right]\leq d.
\]
\end{lemma}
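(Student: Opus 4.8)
The statement to prove is: for $f\colon\{-1,1\}^n\to[-1,1]$ of degree $d$,
\[
\EE[(x_1+\cdots+x_n)f(\mathbf{x})] \leq d.
\]
The left-hand side is exactly $\sum_{i=1}^n \hat f(\{i\})$, the sum of the degree-1 Fourier coefficients, so the claim is a bound on $\sum_i \hat f(\{i\})$.

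The plan is to reduce to the symmetric (one-variable) case via the symmetrization operator defined earlier, and then push everything through the Bernstein–Markov inequality. Concretely, let $g = \Sym(f)$. Symmetrization is an averaging operation, so $\|g\|_\infty \le \|f\|_\infty \le 1$ and $\deg g \le d$; moreover averaging over $S_n$ commutes with taking Fourier coefficients in the obvious symmetric way, so $\EE[(x_1+\cdots+x_n)g(\mathbf{x})] = \EE[(x_1+\cdots+x_n)f(\mathbf{x})]$ — each $\hat g(\{i\})$ equals the common value $\frac{1}{n}\sum_j \hat f(\{j\})$, and summing over $i$ recovers $\sum_j\hat f(\{j\})$. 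Thus it suffices to prove the bound when $f$ is symmetric. For symmetric $f$ of degree $d$ on $n$ variables, write $f(\mathbf{x}) = p\big(\frac{x_1+\cdots+x_n}{n}\big)$ for a univariate polynomial $p$ of degree $d$; the subtlety is that $\|p\|_\infty$ need not be bounded by $1$ for small $n$ (cf.\ Lemma~\ref{lemma:symmetric-new}), so I would handle small $n$ separately or, more cleanly, first symmetrize to a large number of coordinates $m \gg n$ using $\Sym_m$. Since $\Sym_m$ only averages, it preserves the $[-1,1]$ bound and the degree, and — because adding dummy coordinates does not change $f$ and hence does not change $\sum_i\hat f(\{i\})$ — it also preserves the quantity $\EE[(\sum_i x_i)f]$. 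So WLOG we may assume $n$ is as large as we like relative to $d$, in which case $\|p\|_\infty \le \frac{n}{n-d^2} \to 1$.

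Now with $f$ symmetric, $n$ huge, and $f(\mathbf x) = p(\frac{x_1+\cdots+x_n}{n})$ with $\|p\|_\infty$ close to $1$, I would compute $\EE[(x_1+\cdots+x_n)f(\mathbf x)] = n\,\EE[x_n f(\mathbf x)]$ by conditioning on $S = x_1+\cdots+x_{n-1}$: this equals $n\,\EE_S\big[\tfrac12\big(p(\tfrac{S+1}{n}) - p(\tfrac{S-1}{n})\big)\big]$, which by the Mean Value Theorem is $\EE_S[p'(\tfrac{S+\theta_S}{n})]$ for some $\theta_S\in[-1,1]$. By the Bernstein–Markov inequality (Proposition~\ref{pro:bernstein}), $|p'(t)| \le d^2\|p\|_\infty$ everywhere, and more importantly $|p'(t)| \le \frac{d\|p\|_\infty}{\sqrt{1-t^2}}$ away from the endpoints. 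Since $S$ is a sum of $n-1$ independent $\pm1$ signs, $|S|$ is $O(\sqrt{n})$ except with exponentially small probability; on that typical event $\frac{|S+\theta_S|}{n} = O(1/\sqrt n)$, so $|p'| \le d\|p\|_\infty(1+o(1)) = d(1+o(1))$, while the rare tail event contributes at most $d^2\|p\|_\infty \cdot \Pr[|S|\gtrsim \sqrt n] = o(1)$ by Hoeffding. Letting $n\to\infty$ gives the clean bound $d$ — this is essentially the computation already carried out in the proof of Theorem~\ref{thm:dsymmetric}, specialized to controlling $\EE[(\sum x_i)f]$ rather than $\Inf[f]$, and in fact slightly easier since there are no absolute values inside the expectation.

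The main obstacle is the bookkeeping around $\|p\|_\infty$: one cannot directly assume $\|p\|_\infty\le1$, and the honest fix is the two-step symmetrization ($\Sym$ to make $f$ symmetric, then $\Sym_m$ to blow up the number of coordinates) together with the observation that neither step changes $\EE[(\sum_i x_i)f]$ nor increases the degree or sup-norm. Once that reduction is in place, the estimate is a routine Bernstein–Markov-plus-Hoeffding argument. One should also double-check the edge case $d=0$ (where the left side is $0$) and note that replacing $f$ by $-f$ shows the bound $|\EE[(\sum_i x_i)f]| \le d$ holds as well, though only the one-sided version is stated.
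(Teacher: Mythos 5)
Your proposal is correct and follows essentially the same route as the paper: pad to $m$ coordinates, symmetrize (which preserves $\EE[(\sum_i x_i)f]$, the degree, and the sup-norm), and then control the resulting univariate polynomial via Lemma~\ref{lemma:symmetric-new}, Bernstein--Markov and Hoeffding, letting $m\to\infty$. The only difference is cosmetic: the paper shortcuts the final estimate by observing $\EE[(x_1+\cdots+x_m)g]\leq \Inf[g]$ and quoting Theorem~\ref{thm:dsymmetric} as a black box, whereas you rerun that theorem's internal computation inline.
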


We note that the same result for Boolean functions is trivial, as for any Boolean $f$ of degree $d$, we have $\mathbb{E}\left[\left(x_{1}+x_{2}+\cdots+x_{n}\right)f\left(x_{1},x_{2},\ldots,x_{n}\right)\right]
\leq \Inf[f] \leq d$.

\begin{proof}
Let $f$ be as in the assumption. For any $m> \max(n,d^{2})$ we have
\[
\mathbb{E}\left[\left(x_{1}+x_{2}+\cdots+x_{n}\right)f\left(x_{1},x_{2},\ldots,x_{n}\right)\right]
=\mathbb{E}\left[\left(x_{1}+x_{2}+\cdots+x_{m}\right)f\left(x_{1},x_{2},\ldots,x_{n}\right)\right].
\]
Let $g$ be the $m$-coordinate symmetrization of $f$. It is easy to see that
\[
\mathbb{E}\left[\left(x_{1}+x_{2}+\cdots+x_{m}\right)g\left(x_{1},x_{2},\ldots,x_{m}\right)\right]
=\mathbb{E}\left[\left(x_{1}+x_{2}+\cdots+x_{m}\right)f\left(x_{1},x_{2},\ldots,x_{n}\right)\right].
\]
Since $g$ is symmetric, by Theorem~\ref{thm:dsymmetric} we have
\[
\mathbb{E}\left[\left(x_{1}+x_{2}+\cdots+x_{m}\right)g\left(x_{1},x_{2},\ldots,x_{m}\right)\right]
\leq \Inf\left[g\right] \leq \frac{m}{m-d^{2}}\left(d+O\left(\frac{d\log\left(dm\right)}{m}\right)\right).
\]
The assertion follows by tending $m$ to infinity.
\end{proof}

The next lemma shows an improved upper bound on the influence of bounded symmetric
functions that satisfy a certain monotonicity condition.
\begin{lemma}
Let $n\in\mathbb{N}$ and let $p\colon[-1,1] \to \mathbb{R}$ be a polynomial of degree $d$ that is monotone
in the interval $\left[-\frac{t}{\sqrt{n}}-\frac{2}{n},\frac{t}{\sqrt{n}}+\frac{2}{n}\right]$.
Define $f\colon\{-1,1\}^n \rightarrow \mathbb{R}$ by $f(x_{1},\ldots,x_{n}) =
p(\frac{x_{1}+\cdots+x_{n}}{n})$. If $|f(x)| \leq 1$ for all $x$, then
$\Inf[f]\leq d+O\left(\sqrt{dn}e^{-t^{2}/4}\right)$.
\end{lemma}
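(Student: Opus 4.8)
The plan is to run the argument behind Theorem~\ref{thm:dsymmetric}, but to exploit the monotonicity hypothesis in order to drop the absolute value from the dominant part of the expectation. As in the proof of that theorem, set $S = x_1+\cdots+x_{n-1}$ and $\phi(S) = \tfrac12\bigl(p(\tfrac{S+1}{n}) - p(\tfrac{S-1}{n})\bigr)$; by symmetry of $f$ one has
\[ \Inf[f] = n\,\EE_{\mathbf{x}\in\{-1,1\}^{n-1}}\bigl[\,|\phi(S)|\,\bigr]. \]
I would fix the threshold $T = t\sqrt{n}$ and split this expectation according to whether $|S|\le T$ or $|S| > T$.

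For the bulk term, observe that when $|S|\le T$ both $\tfrac{S-1}{n}$ and $\tfrac{S+1}{n}$ lie in $[-\tfrac{t}{\sqrt n}-\tfrac2n,\ \tfrac{t}{\sqrt n}+\tfrac2n]$, where $p$ is monotone; since $\tfrac{S-1}{n} < \tfrac{S+1}{n}$, the quantity $\phi(S)$ has the same sign for every such $S$. Consequently, using $\mathbf{1}_{|S|\le T} = 1 - \mathbf{1}_{|S|>T}$ and the triangle inequality,
\[ n\,\EE\bigl[\,|\phi(S)|\,\mathbf{1}_{|S|\le T}\,\bigr] = \Bigl|\,n\,\EE\bigl[\phi(S)\,\mathbf{1}_{|S|\le T}\bigr]\,\Bigr| \le \bigl|\,n\,\EE[\phi(S)]\,\bigr| + n\,\EE\bigl[\,|\phi(S)|\,\mathbf{1}_{|S|> T}\,\bigr]. \]
Setting the $n$th coordinate to $\pm 1$ shows $n\,\EE[\phi(S)] = n\,\hat f(\{n\}) = \EE[(x_1+\cdots+x_n)f]$ by symmetry, and $|\EE[(x_1+\cdots+x_n)f]|\le d$ by the previous lemma (applied to $f$ and to $-f$). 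So the bulk term is at most $d$ plus one more copy of the tail term $n\,\EE[\,|\phi(S)|\,\mathbf{1}_{|S|> T}\,]$.

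To bound that tail term I would avoid the pointwise estimate $|\phi(S)| \le d^2\|p\|_\infty/n$ from Bernstein--Markov --- which is lossy, giving only $e^{-t^2/2}$ and a factor $d^2$ --- and instead use Cauchy--Schwarz:
\[ \EE\bigl[\,|\phi(S)|\,\mathbf{1}_{|S|>T}\,\bigr] \le \bigl(\EE[\phi(S)^2]\bigr)^{1/2}\,\Pr[\,|S|>T\,]^{1/2}. \]
Here $\EE[\phi(S)^2] = \Inf^{(2)}_n[f] = \tfrac1n\Inf^{(2)}[f] \le \tfrac1n\,d\,\|f\|_\infty^2 \le d/n$, and Hoeffding's inequality gives $\Pr[\,|S|>T\,]\le 2e^{-t^2/2}$ since $T^2/(n-1)\ge t^2$. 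Hence $n\,\EE[\,|\phi(S)|\,\mathbf{1}_{|S|>T}\,]\le \sqrt{2dn}\,e^{-t^2/4}$, and combining with the bulk bound yields $\Inf[f]\le d + 2\sqrt{2dn}\,e^{-t^2/4}$, as required.

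The only genuinely delicate point is the tail estimate: one has to pass from $|\phi(S)|$ to the $L^2$-influence rather than using the pointwise derivative bound, and then the classical inequality $\Inf^{(2)}[f]\le \deg f\cdot\|f\|_\infty^2$ does exactly the job of turning a useless $d^2e^{-t^2/2}$ into the required $\sqrt{dn}\,e^{-t^2/4}$. The remaining steps --- the symmetrization identity for $\Inf[f]$, the sign argument on the monotone interval, and the reduction to $\EE[(x_1+\cdots+x_n)f]\le d$ --- are routine.
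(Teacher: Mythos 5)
Your proof is correct and is essentially the paper's own argument in slightly different clothing: the paper writes $n\EE[|f_1|]=n\EE[x_1 f_1]+n\EE[(\sign[f_1]-x_1)f_1]$, bounds the first term by the previous lemma and the correction term by Cauchy--Schwarz with $\|f_1\|_2\le\sqrt{d/n}$ together with a Hoeffding bound on $\Pr\left[\left|\tfrac{x_1+\cdots+x_n}{n}\right|>\tfrac{t}{\sqrt n}\right]$, using monotonicity to show $\sign[f_1]=x_1$ on that bulk event. Your indicator split at $|S|\le t\sqrt n$ uses exactly the same three ingredients (constant sign of the discrete derivative on the bulk, the first-level-coefficient lemma, and $\Inf^{(2)}[f]\le d$ plus Hoeffding via Cauchy--Schwarz), so it is the same approach and has no gaps.
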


\begin{proof}
Assume without loss of generality that $p$ is increasing in $\left[-\frac{t}{\sqrt{n}}-\frac{2}{n},
\frac{t}{\sqrt{n}}+\frac{2}{n}\right]$.
We have
\begin{align}
\Inf[f] & = \sum_{i=1}^n \|f_i\|_1= n\EE[|f_1|] \notag \\
 & =n \EE[x_1 f_1(x)]+n \EE[(\sign[f_1(x)]-x_1) f_1(x)] \notag \\
 & \leq \EE[(x_1+\cdots+x_n)f(x)]+n\|(\sign[f_1(x)]-x_1)\|_2 \label{Eq:Aux1}
 \|f_1(x)\|_2,
\end{align}
where the last inequality uses the Cauchy--Schwarz inequality.
We claim that:
\begin{enumerate}
\item $\EE\left[(x_{1}+\cdots+x_{n})f(x)\right]\leq d$,
\item $\|f_1(x)\|_{2} \leq \sqrt{\frac{d}{n}}$, and
\item $\|\sign[f_1(x)]-x_1\|_2 \leq 2\sqrt{2}e^{-\frac{t^{2}}{4}}$.
\end{enumerate}
The first inequality follows from the previous lemma. The second inequality follows from the fact that
$n \EE[f_1(x)^2] = \Inf^{(2)}[f] \leq d$. To see the third inequality,
note that since $p$ is increasing in the interval $\left[-\frac{t}{\sqrt{n}}-\frac{2}{n},
\frac{t}{\sqrt{n}}+\frac{2}{n}\right]$, we have
\[
\left|\frac{\sign [f_1(x)]-x_1}{2} \right| \leq\begin{cases}
1 & \text{if }\left|\frac{x_{1}+\cdots+x_{n}}{n}\right|>\frac{t}{\sqrt{n}},\\
0 & \text{if }\left|\frac{x_{1}+\cdots+x_{n}}{n}\right|\leq\frac{t}{\sqrt{n}}.
\end{cases}
\]
Indeed, when $\left|\frac{x_1+\cdots+x_n}{n}\right| \leq \frac{t}{\sqrt{n}}$ and $x_1=1$, monotonicity of $p$ implies that $f(x) > f(x\oplus e_1)$, and similarly when $x_1=-1$, monotonicity of $p$ implies that $f(x) < f(x\oplus e_1)$.
Therefore, by Hoeffding's inequality,
\[
\|\sign[f_1(x)]-x_1\|_{2}=
2\left\|\frac{\sign[f_1(x)]-x_1}{2}\right\|_2
\leq2\sqrt{\Pr\left[\left|\frac{x_{1}+\cdots+x_{n}}{n}\right|>\frac{t}{\sqrt{n}}\right]}
\leq2\sqrt{2e^{-t^{2}/2}}.
\]
Substituting the three inequalities into~\eqref{Eq:Aux1} yields the assertion of the lemma.
\end{proof}

We are ready now to show our improved upper bound.
\begin{theorem}
Let $f\colon\left\{ -1,1\right\} ^{n}\rightarrow \left[-1,1\right]$ be
a symmetric function of degree $d$. If $n\geq64d^{4}\log d$, then
$\Inf\left[f\right]\leq d+O\left(\sqrt{dn}e^{-t^{2}/4}\right)$, for
$t=\sqrt{n}\left(\frac{1}{4d^{2}}-\frac{2}{n}\right)$.
\end{theorem}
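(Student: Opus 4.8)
The plan is to split on whether the univariate polynomial representing $f$ is monotone on a $\Theta(d^{-2})$-neighbourhood of the origin: the monotone case is exactly the lemma preceding this theorem, and the non-monotone case I would handle by a direct derivative estimate showing that the influence is then negligibly small (of lower order than $d$).

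First I would dispose of $d=1$ using Theorem~\ref{thm:d2} (which already gives $\Inf[f]\le d^2=d$ there), and assume $d\ge 2$. Since $n\ge 64d^4\log d>d^2$, Lemma~\ref{lemma:symmetric-new} lets me write $f(\mathbf x)=p\bigl(\frac{x_1+\cdots+x_n}{n}\bigr)$ with $\deg p=d$ and $\|p\|_\infty\le\frac{n}{n-d^2}\le 1+\frac{1}{32d^2\log d}<2$. With $t=\sqrt n\bigl(\frac1{4d^2}-\frac2n\bigr)$ as in the statement we have $\frac t{\sqrt n}+\frac2n=\frac1{4d^2}$; set $I=[-\frac1{4d^2},\frac1{4d^2}]$. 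I would also record, from \eqref{Eq:New-Symmetric1} and the mean value theorem, the identity $\Inf[f]=\EE[\,|p'(\xi_S)|\,]$, where $S$ is a sum of $n-1$ independent signs and $\xi_S$ is some point of $\bigl(\frac{S-1}{n},\frac{S+1}{n}\bigr)$.

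If $p$ is monotone on $I$, then $p$, $f$ and $t$ satisfy the hypotheses of the lemma preceding this theorem (whose interval is precisely $I$), and that lemma yields $\Inf[f]\le d+O(\sqrt{dn}\,e^{-t^2/4})$. If $p$ is not monotone on $I$, then $p'$ changes sign at some $z_0$ with $r:=|z_0|<\frac1{4d^2}$ and $p'(z_0)=0$. Here I would invoke the classical Bernstein--Markov inequalities: $\|p'\|_\infty\le d^2\|p\|_\infty$ and then, applied to $p'$, $|p''(x)|\le\frac{d\|p'\|_\infty}{\sqrt{1-x^2}}\le(1+o(1))d^3\|p\|_\infty$ for $x\in I$ (the sharper $|p''|=(1+o(1))d^2\|p\|_\infty$, obtained from Bernstein's inequality for the trigonometric polynomial $\theta\mapsto p(\cos\theta)$, also works and is cleaner). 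Integrating from $z_0$ gives $|p'(u)|\le(1+o(1))d^3\|p\|_\infty(|u|+r)$ on $I$, so, splitting on whether $\xi_S\in I$ and using $|p'|\le d^2\|p\|_\infty$ off $I$,
\[
\Inf[f]=\EE[\,|p'(\xi_S)|\,]\le(1+o(1))d^3\|p\|_\infty\bigl(\EE|\xi_S|+r\bigr)+d^2\|p\|_\infty\,\Pr[\xi_S\notin I].
\]
Since $\EE|\xi_S|\le\frac1{\sqrt n}+\frac1n$ and $r<\frac1{4d^2}$, the hypothesis $n\ge 64d^4\log d$ makes the first summand $O(d)$, and (tracking constants) at most $\tfrac12 d$ for $d\ge2$. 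Since $\xi_S\notin I$ forces $|S|>\frac n{4d^2}-1$, Hoeffding's inequality plus a short computation gives $\Pr[\xi_S\notin I]\le 2e^{-t^2/4}$, so the second summand is $O(d^2e^{-t^2/4})=O(\sqrt{dn}\,e^{-t^2/4})$ (as $d^3\le n$). Hence $\Inf[f]\le\tfrac12 d+O(\sqrt{dn}\,e^{-t^2/4})\le d+O(\sqrt{dn}\,e^{-t^2/4})$.

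The main obstacle is the non-monotone case. The worry is that $p$ might rise with slope $\sim d$ across part of $I$ and then have $p'$ change sign inside $I$, which would put $\Inf[f]$ at order $d$ while ruling out the preceding lemma. What saves the argument is the quantitative second-derivative bound $|p''|=O(d^2)\|p\|_\infty$ (or even just $O(d^3)\|p\|_\infty$, with a small implied constant) near the centre of $[-1,1]$: a zero of $p'$ within $\frac1{4d^2}$ of the origin then forces $|p'|$ to be only $O(1)$ (respectively $\le\tfrac12 d$) throughout $I$, so the bulk of the distribution of $S$ contributes $o(d)$ to the influence and only the exponentially rare event $|S|>\frac n{4d^2}$ produces the stated error term. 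The one technical point needing care is checking that the implied constants remain small enough that the bulk contribution stays below $d$, and that $\frac{d^3}{\sqrt n}=O(d)$ under $n\ge 64d^4\log d$.
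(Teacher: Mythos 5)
Your proposal is correct and follows essentially the same route as the paper: the same case split on whether $p$ is monotone on $\left[-\frac{1}{4d^2},\frac{1}{4d^2}\right]$, with the monotone case handed to the preceding lemma and the non-monotone case handled by applying Bernstein--Markov twice to bound $|p''|$, exploiting the zero of $p'$ to force $|p'|$ to be a small multiple of $d$ on that interval, and a Hoeffding tail bound for the rare event that the sum leaves the interval. The only differences are cosmetic: you integrate $p''$ from the zero and average $|\xi_S|$, whereas the paper uses the uniform bound $|p'(y)|\le\frac{3}{4}d$ on the whole interval and absorbs the tail as negligible.
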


\begin{proof}
At several places in the proof, we assume for convenience that $d$ is large enough; otherwise the theorem is trivial.

Write $f(x)=p\left(\frac{x_{1}+\cdots+x_{n}}{n}\right)$. By Lemma~\ref{lemma:symmetric-new},
$\|p\|_\infty \leq \frac{n}{n-d^2}$. Hence, by Markov--Bernstein's inequality applied to
$p$, we have for all $x \in [-1,1]$,
\[
|p'(x)| \leq d^2 \|p\|_{\infty} \leq d^2 \cdot \frac{n}{n-d^2}.
\]
Applying Markov--Bernstein to $p'$, we obtain for all $x\in\left[-\frac{1}{4d^{2}},
\frac{1}{4d^{2}}\right]$,
\[
|p''(x)| \leq (d^2 \cdot \frac{n}{n-d^2}) \cdot \frac{d-1}{\sqrt{1-x^2}} \leq
(d^2 \cdot \frac{n}{n-d^2}) \cdot \frac{d}{\sqrt{1-1/16d^4}} \leq \frac{3}{2} d^3.
\]
If $p$ is monotone in the interval $x\in\left[-\frac{1}{4d^{2}},
\frac{1}{4d^{2}}\right]$, then the assertion of the theorem follows from the previous
lemma. Otherwise, there exists $x_0 \in\left[-\frac{1}{4d^{2}},\frac{1}{4d^{2}}\right]$
such that $p'(x_0)=0$. By the Mean Value theorem, for any
$y \in \left[-\frac{1}{4d^{2}},\frac{1}{4d^{2}}\right]$ there exists some $z \in \left[-\frac{1}{4d^{2}},\frac{1}{4d^{2}}\right]$ such that
\[
2d^{2}|p'(y)|\leq\left|\frac{p'(y)}{y-x_0}\right| = |p''(z)| \leq \frac{3}{2} d^{3}.
\]
Hence, $p'(y)\leq\frac{3}{4}d$. Now, recall that
$\Inf[f]=\EE[n|f_n(x)|]$.
Since for any $x=(x_1,\ldots,x_n)$,
$n|f_n(x)|=\left|p'(x')\right|$
for some $x'$ in the interval $\left[\frac{x_{1}+\cdots+x_{n-1}}{n}-\frac{1}{n},
\frac{x_{1}+\cdots+x_{n-1}}{n}+\frac{1}{n}\right]$, we have
\[
n|f_n(x)|\leq\begin{cases}
\frac{3}{4}d & \text{if }\left|\frac{x_{1}+\cdots+x_{n-1}}{n}\right|\leq\frac{1}{4d^{2}}-\frac{1}{n},\\
d^{2} & \text{if }\left|\frac{x_{1}+\cdots+x_{n-1}}{n}\right|>\frac{1}{4d^{2}}-\frac{1}{n},
\end{cases}
\]
using Theorem~\ref{thm:d2} in the second case.
Since $n \geq 64d^4\log d$, the event $\left|\frac{x_{1}+\cdots+x_{n-1}}{n}\right|>\frac{1}{4d^{2}}-\frac{1}{n}$
has a negligible probability, and so $\Inf[f]\leq d$.
\end{proof}

When $d \geq n^{1/2}$, we do not know how to improve over the trivial upper bound
$\Inf[f] \leq \sqrt{dn} \leq d^{3/2}$ following from the Cauchy--Schwarz inequality together with the bound $\Inf^{(2)}[f] \leq d$.

\subsection{Upper bound for monotone functions}
\label{sec:sub:monotone}

The upper bound $d/2 \pi +o(d)$ for monotone functions uses a recent result of Klurman~\cite{K12}.
\begin{proposition}[Klurman]\label{Prop:Klurman}
Denote by $X_d$ the set of all degree $d$ univariate polynomials $p\colon[-1,1] \to \RR$
that are monotone. Let $S_d,H_d,F_d \in X_d$ be the following polynomials:
\[
S_d(x)=(1+x) \sum_{i=0}^d (J_i^{(0,1)}(x))^2, \qquad H_d(x)=(1-x^2) \sum_{i=0}^{d-1} (J_i^{(1,1)}(x))^2, \qquad
F_d(x)=\sum_{i=0}^d (J_i^{(0,0)}(x))^2,
\]
where $J_i^{(0,1)}, J_i^{(0,0)}, J_i^{(1,1)}$ are Jacobi polynomials.

For any $d \geq 1$, any $p \in X_d$, and any $x_0 \in [-1,1]$, we have:
\begin{enumerate}
\item $|p'(x_0)| \leq 2 \max(S_k(x_0),S_k(-x_0)) \|p\|_{\infty}$ for $d=2k+2$, and
\item $|p'(x_0)| \leq 2 \max(F_k(x_0),H_k(x_0)) \|p\|_{\infty}$ for $d=2k+1$.
\end{enumerate}
\end{proposition}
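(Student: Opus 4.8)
The plan is to turn this into a Markov-type problem for \emph{nonnegative} polynomials and then solve that problem using the Markov--Lukács (Fejér) sum-of-squares representation together with the Christoffel function (reproducing kernel) of the relevant Jacobi weights. First I would reduce: normalize $\|p\|_\infty = 1$ and assume $p$ is increasing (if it is decreasing, replace $p$ by $-p$; the bound on $|p'(x_0)|$ is unchanged, and the symmetry $x\mapsto -x$ is exactly why case~1 involves both $x_0$ and $-x_0$). Then $q := p'$ is a polynomial of degree $d-1$ that is nonnegative on $[-1,1]$ and satisfies
\[
\int_{-1}^1 q(x)\,dx = p(1)-p(-1) \le 2,
\]
since $-1\le p(-1)\le p(1)\le 1$. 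So it suffices to show that any polynomial $q$ nonnegative on $[-1,1]$ with $\deg q\le d-1$ and $\int_{-1}^1 q = 2$ satisfies the claimed bound at $x_0$. (Conversely, any such $q$ is $p'$ for an admissible $p$, which is what should yield the matching example later in the paper.)

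Next I would invoke the Markov--Lukács representation, splitting on the parity of $d$. If $d=2k+1$ then $\deg q\le 2k$, so $q = A^2 + (1-x^2)B^2$ with $\deg A\le k$, $\deg B\le k-1$; if $d=2k+2$ then $\deg q\le 2k+1$, so $q = (1+x)C^2 + (1-x)D^2$ with $\deg C,\deg D\le k$. In each case the weight attached to a square is a Jacobi weight $(1-x)^\alpha(1+x)^\beta$, so that $\int_{-1}^1 q$ equals the sum of the squared $L^2$-norms of $A,B$ (with weights $(0,0)$ and $(1,1)$) or of $C,D$ (with weights $(0,1)$ and $(1,0)$). Then I would use the reproducing-kernel bound: for any polynomial $R$ of degree $\le m$,
\[
R(x_0)^2 \le \|R\|_{(\alpha,\beta)}^2 \cdot K_m^{(\alpha,\beta)}(x_0,x_0), \qquad K_m^{(\alpha,\beta)}(x_0,x_0) = \sum_{i=0}^m J_i^{(\alpha,\beta)}(x_0)^2,
\]
where the $J_i$ are \emph{orthonormal} for the weight $(1-x)^\alpha(1+x)^\beta$; this is just the fact that the squared norm of the evaluation functional $R\mapsto R(x_0)$ equals the diagonal of the Christoffel--Darboux kernel. (In particular this identifies $S_d,F_d,H_d$ in the statement as being built from orthonormal Jacobi polynomials.)

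To assemble: in the case $d=2k+1$, write $a=\|A\|_{(0,0)}^2$ and $b=\|B\|_{(1,1)}^2$, so $a,b\ge 0$ and $a+b=2$. The reproducing-kernel bound absorbs the factor $1-x_0^2$ into the $(1,1)$-kernel, giving
\[
q(x_0) = A(x_0)^2 + (1-x_0^2)B(x_0)^2 \le a\,F_k(x_0) + b\,H_k(x_0) \le 2\max\bigl(F_k(x_0),H_k(x_0)\bigr),
\]
the last step because a linear function on the segment $\{a+b=2,\ a,b\ge0\}$ attains its maximum at an endpoint. In the case $d=2k+2$, using $J_i^{(1,0)}(x)=\pm J_i^{(0,1)}(-x)$ to rewrite $(1-x_0)\,K_k^{(1,0)}(x_0,x_0) = S_k(-x_0)$, the same scheme yields $q(x_0)\le a\,S_k(x_0) + b\,S_k(-x_0)\le 2\max(S_k(x_0),S_k(-x_0))$ with $a+b=2$. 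Undoing the normalization restores the $\|p\|_\infty$ factor, which is the proposition.

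The substantive points, rather than the bookkeeping, are: (i) proving the reproducing-kernel identity and that the evaluation-functional norm equals $K_m^{(\alpha,\beta)}(x_0,x_0)$ — this is what pins down the normalization of the Jacobi polynomials and must be stated carefully; and (ii) matching the Markov--Lukács degree constraints to the parity of $d$ so that the sums run to exactly $k$ (and to $k-1$ for $H_k$). A small but essential observation is that the two-term $\max$ is produced precisely because the linear ``mass-splitting'' optimum is attained at a vertex; tracking equality through the Cauchy--Schwarz step then forces $R$ to be a scalar multiple of $K_m^{(\alpha,\beta)}(\cdot,x_0)$ in the dominant term and the other term to vanish, which is the route I would follow for the tightness claim.
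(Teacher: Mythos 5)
The paper does not prove this proposition at all---it is quoted from Klurman~\cite{K12}---so there is no internal proof to compare your argument against; it has to be judged on its own. On those terms it is correct and, as far as I can tell, is the natural (and presumably Klurman's own) route: after normalizing and replacing $p$ by $-p$ if necessary, $q=p'$ is nonnegative on $[-1,1]$ of degree $d-1$ with $\int_{-1}^1 q = p(1)-p(-1)\le 2$; the Markov--Luk\'acs decomposition ($q=A^2+(1-x^2)B^2$ when $d=2k+1$, $q=(1+x)C^2+(1-x)D^2$ when $d=2k+2$, with exactly the degree bounds you state) turns the mass constraint into a split $a+b\le 2$ of weighted $L^2$ norms; the Christoffel--Darboux diagonal controls each square at $x_0$; and the reflection $J_i^{(1,0)}(x)=\pm J_i^{(0,1)}(-x)$ converts the $(1,0)$ term into $S_k(-x_0)$. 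The index bookkeeping (sums to $k$ for $F_k,S_k$, to $k-1$ for $H_k$) matches the statement, and the prefactor $2$ is precisely the total-variation bound, with the maximum over the segment attained at a vertex; the only cosmetic slips are that the constraint is $a+b\le 2$ rather than $=2$ (harmless, since the kernels are nonnegative) and that you should note $|p'(x_0)|=p'(x_0)=q(x_0)$ after the reduction to increasing $p$.

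The one substantive caveat is the normalization point you yourself flag, and it deserves to be explicit: the inequality in this form holds when the $J_i^{(\alpha,\beta)}$ are \emph{orthonormal} with respect to $(1-x)^\alpha(1+x)^\beta\,dx$, whereas Section~\ref{sec:definitions} defines Jacobi polynomials in the classical normalization; with the classical normalization the right-hand side at $x_0=0$ grows only logarithmically in $d$ and the statement would be false for large $d$, so the orthonormal reading is forced (and with it the bound is sharp already at $d=1,3$). A related flag that falls out of the sharpness of your argument: putting all the mass on the better kernel term (e.g.\ $A$ proportional to $K_k^{(0,0)}(\cdot,x_0)$) produces admissible monotone $p$ with $p'(0)=2F_k(0)\sim 2k/\pi\sim d/\pi$, so the asymptotic constant extracted in~\eqref{Eq:Klurman} (and propagated to Theorem~\ref{thm:monotone}) should be re-checked against Klurman's normalization conventions, as it appears to differ by a factor of $2$ from what this proposition, read with orthonormal Jacobi polynomials, actually yields at $x_0=0$.
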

Using a classical asymptotic estimate on weighted sums of Jacobi polynomials (see~\cite[Theorem 6.2.35]{N79}), Proposition~\ref{Prop:Klurman} implies
\begin{equation}\label{Eq:Klurman}
|p'(0)| \leq \frac{d}{2\pi} + o(d),
\end{equation}
and the maximum is attained for the polynomial $p$ whose derivative is the corresponding
$S_k$, $F_k$ or $H_k$, depending on the parity of $d$.

\medskip

The reduction from monotone functions on the discrete cube to monotone univariate
polynomials is obtained in two steps. First, we show that one can assume without loss of generality
that the monotone function is symmetric, and then we show that when performing the
reduction described in Section~\ref{sec:sub:symmetric}, the resulting univariate
polynomial can be made as close as we wish to monotone.
\begin{lemma}\label{Lem:Sym-Monotone}
Let $f\colon\{-1,1\}^n \rightarrow \RR$ be monotone, and let $g=\Sym(f)$ be the
symmetrization of $f$. Then $\Inf[g] = \Inf[f]$.
\end{lemma}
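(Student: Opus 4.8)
For monotone $f\colon\{-1,1\}^n\to\RR$ with $g=\Sym(f)$, we have $\Inf[g]=\Inf[f]$.

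\textbf{Plan.} The key observation is that for a monotone (increasing) function $f\colon\{-1,1\}^n\to\RR$ the total influence has a clean closed form, namely $\Inf[f]=\sum_{i=1}^n\hat f(\{i\})=\EE[(x_1+\cdots+x_n)f(x)]$. To see this, recall from Section~\ref{sec:definitions} that $f_i(\mathbf x)=x_i\frac{\partial f}{\partial x_i}(\mathbf x)$, and note that for multilinear $f$ the partial derivative $\frac{\partial f}{\partial x_i}$ does not depend on $x_i$ and equals $\frac12\big(f(\mathbf x^{i\to 1})-f(\mathbf x^{i\to -1})\big)$, which is nonnegative at every point of the cube precisely because $f$ is monotone increasing. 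Hence $|f_i(\mathbf x)|=|x_i|\cdot\frac{\partial f}{\partial x_i}(\mathbf x)=\frac{\partial f}{\partial x_i}(\mathbf x)$, so $\Inf_i[f]=\EE_{\mathbf x}\big[\frac{\partial f}{\partial x_i}(\mathbf x)\big]=\hat f(\{i\})$, the constant Fourier coefficient of $\frac{\partial f}{\partial x_i}=\sum_{S\ni i}\hat f(S)\chi_{S\setminus\{i\}}$. Summing over $i$ and using $\sum_i\hat f(\{i\})=\EE[(x_1+\cdots+x_n)f]$ gives the stated formula.

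Next I would observe that $g=\Sym(f)$ is again monotone increasing: it is the average over $\sigma\in S_n$ of the functions $\mathbf x\mapsto f(x_{\sigma(1)},\dots,x_{\sigma(n)})$, and each of these is monotone since $\mathbf x\le\mathbf y$ coordinatewise implies the permuted tuples are also comparable coordinatewise. An average of monotone functions is monotone, so the closed form from the first paragraph applies to $g$ as well: $\Inf[g]=\EE[(x_1+\cdots+x_n)\,g(\mathbf x)]$.

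Finally I would show $\EE[(x_1+\cdots+x_n)g(\mathbf x)]=\EE[(x_1+\cdots+x_n)f(\mathbf x)]$. Expanding $g$, $\EE[(x_1+\cdots+x_n)g(\mathbf x)]=\EE_{\sigma}\EE_{\mathbf x}[(x_1+\cdots+x_n)f(x_{\sigma(1)},\dots,x_{\sigma(n)})]$; for each fixed $\sigma$, substituting $\mathbf y=(x_{\sigma(1)},\dots,x_{\sigma(n)})$ (a measure-preserving bijection of the cube) leaves $x_1+\cdots+x_n=y_1+\cdots+y_n$ unchanged, so the inner expectation equals $\EE_{\mathbf y}[(y_1+\cdots+y_n)f(\mathbf y)]$, independent of $\sigma$. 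Combining the three steps yields $\Inf[g]=\Inf[f]$.

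\textbf{Main obstacle.} There is no serious difficulty here; the only point requiring care is the sign bookkeeping in the first paragraph — specifically, justifying that $\frac{\partial f}{\partial x_i}\ge 0$ holds pointwise on the cube for monotone $f$ (so that $|f_i|=\frac{\partial f}{\partial x_i}$), rather than merely relating $f_i(\mathbf x)$ to a single discrete difference. Once the identity $\Inf[f]=\EE[(\sum_i x_i)f]$ for monotone $f$ is in hand, the rest is a one-line symmetry argument.
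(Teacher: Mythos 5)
Your proof is correct and follows essentially the same route as the paper: show $g$ is monotone, use that for monotone functions $\Inf[f]=\sum_i \hat f(\{i\})$, and then observe this quantity is invariant under symmetrization. You merely fill in the details that the paper labels as ``easy to see'' (the pointwise nonnegativity of $\partial f/\partial x_i$ giving $|f_i|=\partial f/\partial x_i$, and the change-of-variables argument for $\EE[(x_1+\cdots+x_n)f]$), so no substantive difference.
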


\begin{proof}
First, we note that $g$ is monotone. Indeed, for any $\sigma \in S_n$ and any
$x,y \in \{-1,1\}^n$ such that $x_i \leq y_i$ for all $i$, we have $x_{\sigma(i)} \leq y_{\sigma(i)}$
for all $i$. Hence, by the monotonicity of $f$,
\[
f(x_{\sigma(1)},x_{\sigma(2)},\ldots,x_{\sigma(n)}) \leq f(y_{\sigma(1)},y_{\sigma(2)},
\ldots,y_{\sigma(n)}),
\]
and by taking expectation over $\sigma$ we obtain the monotonicity condition for $g$.

It is easy to see that for any monotone function, the total influence is equal
to the sum of the first-level Fourier coefficients. Since both $f$ and $g$ are
monotone, it is thus sufficient to show that
\[
\sum_{i=1}^n \hat f(\{i\})= \sum_{i=1}^n \widehat{\Sym(f)}(\{i\}).
\]
This indeed clearly holds by the definition of symmetrization.
\end{proof}

The lemma implies that there is no loss in generality in considering only symmetric
functions. Moreover, we can assume without loss of generality that $n$ is as large as we wish by using $m$-symmetrization
for a large $m$ instead of symmetrization. (Clearly, the lemma holds without change
for $m$-symmetrization.) The next lemma takes us all the way to Klurman's result
cited above.
\begin{lemma}\label{Lem:Sym-Monotone2}
For any $d \in \NN$, the supremum over the $L_1$ influences of degree $d$ monotone
functions $f\colon\{-1,1\}^n \rightarrow [-1,1]$ is
\[
M_d = \max_{p \in X_d} \frac{p'(0)}{\|p\|_\infty}.
\]
\end{lemma}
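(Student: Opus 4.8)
The plan is to prove the equality $M_d = \max_{p \in X_d} p'(0)/\|p\|_\infty$ by establishing the two inequalities separately, using the reductions already available in the paper.

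\textbf{Lower bound: $M_d \geq \max_{p \in X_d} p'(0)/\|p\|_\infty$.}
First I would fix a monotone univariate polynomial $p$ of degree $d$ with $\|p\|_\infty \leq 1$ (rescaling, we may assume $p'(0) = M_d'$ is close to the maximum) and build a monotone function on the cube realizing (almost) the value $p'(0)$. The natural candidate is $f(\mathbf{x}) = p\bigl(\frac{x_1 + \cdots + x_n}{n}\bigr)$ for large $n$. Since $p$ is monotone increasing on $[-1,1]$, this $f$ is monotone on $\{-1,1\}^n$, has degree at most $d$, and satisfies $|f| \leq 1$. For a monotone function the total influence equals the sum of first-level Fourier coefficients, which here equals $n \cdot \EE[x_1 f(\mathbf{x})]$; by the mean-value computation already used in the proof of Theorem~\ref{thm:dsymmetric}, this is $n\,\EE\bigl[\tfrac{1}{2}(p(\tfrac{S+1}{n}) - p(\tfrac{S-1}{n}))\bigr] = \EE[p'(\tfrac{S+\theta_S}{n})]$ where $S = x_1 + \cdots + x_{n-1}$. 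As $n \to \infty$, $S/n \to 0$ in probability (and $p'$ is bounded), so this expectation tends to $p'(0)$. Hence $\Inf[f] \to p'(0)$, giving $M_d \geq p'(0)/\|p\|_\infty$ after undoing the rescaling; taking the supremum over $p$ finishes this direction.

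\textbf{Upper bound: $M_d \leq \max_{p \in X_d} p'(0)/\|p\|_\infty$.}
Let $f\colon\{-1,1\}^n \to [-1,1]$ be an arbitrary monotone function of degree $d$. By Lemma~\ref{Lem:Sym-Monotone} we may replace $f$ by its symmetrization $g = \Sym_m(f)$ for $m$ as large as we like without changing the total influence, and $g$ is still monotone, degree $\leq d$, bounded by $1$. Write $g(\mathbf{x}) = p(\frac{x_1 + \cdots + x_m}{m})$ with $\deg p \leq d$; by Lemma~\ref{lemma:symmetric-new}, $\|p\|_\infty \leq m/(m - d^2)$, which $\to 1$ as $m \to \infty$. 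Monotonicity of $g$ forces $p$ to be monotone increasing on the set of attained values $\{-1, -1 + 2/m, \ldots, 1\}$, and — this is the point that needs care — on the whole interval $[-1,1]$ in the limit: if $p'$ changed sign somewhere in $[-1,1]$, then for $m$ large the lattice points straddle that region and one gets a violation of discrete monotonicity (this is essentially the argument in the second half of Section~\ref{sec:sub:symmetric}, where $p''$ is controlled via Bernstein--Markov to rule out or localize critical points). Assuming $p \in X_d$ (after the limit), the same mean-value computation as above gives $\Inf[g] = \EE[p'(\tfrac{S+\theta_S}{m})]$, and since $p$ is monotone, $p' \geq 0$, so $\Inf[g] \leq \|p'\|_{\infty,[-1,1]}$; but for a monotone degree-$d$ polynomial bounded by $\|p\|_\infty$, Klurman's result (Proposition~\ref{Prop:Klurman}) combined with the fact that $p'$ — being nonnegative — attains its maximum... actually one wants $\Inf[g] \to p'(0)$ again because $S/m \to 0$, hence $\Inf[f] = \Inf[g] \leq p'(0)/\|p\|_\infty + o(1) \leq M_d^{\text{poly}} + o(1)$ where $M_d^{\text{poly}} = \max_{p \in X_d} p'(0)/\|p\|_\infty$. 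Letting $m \to \infty$ kills the error term.

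\textbf{Main obstacle.}
The delicate step is the upper bound's claim that the univariate polynomial $p$ extracted from the symmetrized function can be taken (in the $m \to \infty$ limit) to be \emph{monotone on all of $[-1,1]$}, not merely on the $1/m$-spaced lattice. A priori $p$ could have a small ``wiggle'' between lattice points that disappears from the discrete data. To handle this I would use Bernstein--Markov on $p'$ (as in the proof of the $d \ll n^{1/4}$ theorem) to bound $|p''|$ near $0$, showing any such wiggle has amplitude $o(1/m)$ and therefore contributes negligibly to $\EE[p'(\tfrac{S+\theta_S}{m})]$; equivalently, one passes to a subsequential limit of the $p$'s (they have bounded degree and are eventually uniformly bounded), obtains a genuine monotone limit polynomial $p_* \in X_d$ with $\|p_*\|_\infty \leq 1$, and reads off $\Inf[f] = \Inf[g] \to p_*'(0) \leq M_d^{\text{poly}}$. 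The lower-bound direction is comparatively routine. Combining the two inequalities yields $M_d = \max_{p \in X_d} p'(0)/\|p\|_\infty$ as claimed.
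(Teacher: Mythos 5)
Your proof is correct and, in outline, it is the same as the paper's: the lower bound via $f^{(n)}(\mathbf{x})=p\bigl(\frac{x_1+\cdots+x_n}{n}\bigr)$ and~\eqref{Eq:New-Symmetric1}, and the upper bound via $m$-symmetrization (Lemma~\ref{Lem:Sym-Monotone}), extraction of a univariate $p_m$ with $\|p_m\|_\infty\le\frac{m}{m-d^2}$ (Lemma~\ref{lemma:symmetric-new}), and the observation that monotonicity of the symmetrized function makes $p_m$ nondecreasing along the lattice $\{-1+2i/m\}$. You correctly identify the delicate point — $p_m$ is only discretely monotone, and since $p_m$ varies with $m$ a fixed-polynomial ``the lattice eventually straddles the wiggle'' argument does not apply — and the place where you genuinely diverge from the paper is how you repair this. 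The paper does it quantitatively: two applications of Bernstein--Markov give $\|p_m''\|_\infty\le (d-1)^2d^2\cdot\frac{m}{m-d^2}$, each lattice interval contains a point with $p_m'\ge 0$, so by the Mean Value theorem $p_m'\ge -\frac{2d^2(d-1)^2}{m-d^2}$ on all of $[-1,1]$; then the explicit correction $\tilde p_m(x)=p_m(x)+\frac{2d^2(d-1)^2}{m-d^2}x$ lies in $X_d$ with $\|\tilde p_m\|_\infty\le\frac{m+2d^2(d-1)^2}{m-d^2}$, yielding $p_m'(0)\le M_d+\epsilon$ for large $m$. Your second fix, the compactness argument, is a valid softer alternative: the $p_m$ form a bounded family in a finite-dimensional space, so a subsequence converges together with its derivatives to some $p_*$ with $\|p_*\|_\infty\le 1$; discrete monotonicity passes to genuine monotonicity of $p_*$ because the lattices become dense and the $p_m$ are equicontinuous, and $\Inf[f]=\Inf[g_m]\to|p_*'(0)|\le M_d$ using concentration of $\frac{x_1+\cdots+x_m}{m}$ around $0$ and uniform convergence of $p_m'$. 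If you go this route, those two limit statements (monotonicity survives the limit, and $\Inf[g_m]\to|p_*'(0)|$) should be written out, since they carry the content of the step; your first fix (``the wiggle in $p'$ has negligible amplitude'') is essentially the paper's Bernstein--Markov estimate, but as stated it is not yet a proof — one still needs the linear correction or some comparison device to relate $p_m'(0)$ to the maximum over genuinely monotone polynomials. What the paper's version buys is an explicit, quantitative bound with no passage to subsequences; what yours buys is a shorter, cleaner limiting argument at the cost of being non-quantitative.
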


\begin{proof}
Let $p \in X_d$ be such that $p'(0) = M_d$ and
$\|p\|_\infty =1$. Define $f^{(n)}\colon\{-1,1\}^n \rightarrow [-1,1]$ by
$f^{(n)}(x_1,\ldots,x_n) = p((x_1+\ldots+x_n)/n)$. By~(\ref{Eq:New-Symmetric1}), as $n$
goes to infinity, $\Inf[f^{(n)}]$ tends to $p'(0)=M_d$.

\medskip

For the other direction, by Lemma~\ref{lemma:symmetric-new}, for any symmetric function
$f\colon\{-1,1\}^n \rightarrow [-1,1]$ with $n > d^2$,
we can write $f(x)=p((x_1+\cdots+x_n)/n)$ for some degree $d$ polynomial $p$ with
$\|p\|_\infty \leq \frac{n}{n-d^2}$. We show now that for $n$ large enough (that can be obtained
by $m$-symmetrization), $p$ can be made as close as we wish to monotone.

By the Markov--Bernstein inequality, $\|p'\|_\infty \leq d^2 \cdot \frac{n}{n-d^2}$.
Applying Markov--Bernstein to $p'$ (which is a degree $d-1$ polynomial on $[-1,1]$), we
obtain$\|p''(x)\|_\infty \leq (d-1)^2 d^2 \cdot \frac{n}{n-d^2}$.
Let $x \in [-1,1]$. Consider the interval $I$ of the form $[-1+2i/n, -1+2(i+1)/n]$ that contains $x$.
As $p$ agrees with $f$ on the endpoints of the interval and $f$ is monotone, there exists $y \in I$
such that $p'(y) \geq 0$. By the Mean Value theorem,
\[
p'(x) \geq p'(y) - (d-1)^2 d^2 \cdot \frac{n}{n-d^2} \cdot \frac{2}{n} \geq - \frac{2d^2(d-1)^2}{n-d^2}.
\]
It follows that the degree $d$ polynomial $\tilde{p}(x) = p(x)+ \frac{2d^2(d-1)^2}{n-d^2} x$
satisfies $\tilde{p} \in X_d$ and $\|\tilde{p}\|_\infty \leq \frac{n+2d^2(d-1)^2}{n-d^2}$. Hence,
\[
p'(0) = \tilde{p}'(0) - \frac{2d^2(d-1)^2}{n-d^2} \leq \frac{n+2d^2(d-1)^2}{n-d^2} M_d - \frac{2d^2(d-1)^2}{n-d^2}.
\]
In particular, for any $\epsilon>0$, for $n$ large enough we have $p'(0) \leq M_d+\epsilon$.
Finally, (\ref{Eq:New-Symmetric1})~implies that as $n$ tends to infinity, $\Inf[f]$ tends
to $p'(0)$. Since $m$-symmetrization allows us to take $n$ as large as we wish, the assertion
follows.
\end{proof}

Combining Lemma~\ref{Lem:Sym-Monotone2} with Proposition~\ref{Prop:Klurman}, we obtain:
\begin{theorem}\label{thm:monotone}
 Let $f\colon \{-1,1\}^n \to [-1,1]$ be a monotone function of degree $d$. Then
 \[
 \Inf[f] \leq \frac{d}{2\pi} + o(d).
 \]
 The maximal influence is attained for the combination of Jacobi polynomials described
 in Proposition~\ref{Prop:Klurman}.
\end{theorem}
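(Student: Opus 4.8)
The plan is to assemble the pieces already in place in this subsection: the two reduction lemmas, followed by Klurman's polynomial inequality (Proposition~\ref{Prop:Klurman}) in its asymptotic form~\eqref{Eq:Klurman}.

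First I would use Lemma~\ref{Lem:Sym-Monotone} to reduce to the symmetric case: replacing a monotone $f$ by its symmetrization $g = \Sym(f)$ preserves monotonicity and total influence and cannot increase the degree, and the same is true of $m$-symmetrization, so I may freely assume $f$ is symmetric with $n$ as large as I wish. Next I would invoke Lemma~\ref{Lem:Sym-Monotone2}, which identifies the supremum of $\Inf[f]$ over all degree-$d$ monotone functions $f\colon\{-1,1\}^n \to [-1,1]$ with
\[ M_d = \max_{p \in X_d} \frac{p'(0)}{\|p\|_\infty}. \]
This turns the theorem into the purely univariate statement $M_d \le d/(2\pi) + o(d)$, plus identification of the extremizer.

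To finish, I would normalize $p \in X_d$ by $\|p\|_\infty = 1$ and apply Proposition~\ref{Prop:Klurman} at $x_0 = 0$, obtaining $p'(0) \le 2 S_k(0)$ when $d = 2k+2$ and $p'(0) \le 2\max(F_k(0), H_k(0))$ when $d = 2k+1$, with equality precisely when $p'$ is the corresponding weighted sum of squared Jacobi polynomials. It then remains to evaluate $S_k(0)$, $F_k(0)$ and $H_k(0)$ and to show that their common leading term is $d/(4\pi)$; this is exactly the stated consequence~\eqref{Eq:Klurman} of the classical asymptotics for weighted sums of Jacobi polynomials in~\cite[Theorem 6.2.35]{N79}. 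Combining with Lemma~\ref{Lem:Sym-Monotone2} then gives $\Inf[f] \le M_d \le d/(2\pi) + o(d)$, with the extremal polynomial being the one whose derivative is the relevant $S_k$, $F_k$ or $H_k$.

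The two reductions are soft: symmetrization averages away asymmetry while preserving the first-level Fourier mass (which equals the influence of a monotone function), and the passage to univariate polynomials is the Bernstein--Markov smoothing already carried out in Section~\ref{sec:sub:symmetric}. The hard part will be the last step --- Klurman's sharp pointwise derivative bound for monotone polynomials, which I would use as a black box, and then extracting the precise constant $1/(2\pi)$ from the Darboux and Mehler--Heine asymptotics of Jacobi polynomials at the interior point $x = 0$, where $J_i^{(\alpha,\beta)}(\cos\theta)$ oscillates with amplitude of order $i^{-1/2}$ so that the sum of its squares grows linearly in $k$ with exactly the claimed coefficient; one must also verify that the two parity cases ($F_k$ versus $H_k$ for odd degree) contribute the same leading term, so that the single bound $d/(2\pi) + o(d)$ covers both.
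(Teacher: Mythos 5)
Your proposal is correct and follows essentially the same route as the paper, which proves the theorem by combining Lemma~\ref{Lem:Sym-Monotone2} (the reduction, via symmetrization and the Bernstein--Markov smoothing, to the univariate quantity $M_d$) with Proposition~\ref{Prop:Klurman} and the asymptotic estimate~\eqref{Eq:Klurman}. Your extra step of invoking Lemma~\ref{Lem:Sym-Monotone} separately is harmless (it is already subsumed in the proof of Lemma~\ref{Lem:Sym-Monotone2}), and your account of the Jacobi-polynomial asymptotics matches the paper's citation of~\cite[Theorem 6.2.35]{N79}.
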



A natural question one may ask is, what can be said if the monotone function $f$ is also
Boolean. This appears to be a special case of an open problem, attributed by O'Donnell~\cite{oD12}
to Gopalan and Servedio.
\begin{conjecture}[Gopalan and Servedio, 2009] \label{conj:GopalanServedio}
Let $f\colon\{-1,1\}^n \to \{-1,1\}$. Then $\sum_{i=1}^n \hat f(\{i\}) \leq \sqrt{\deg [f]}$.
\end{conjecture}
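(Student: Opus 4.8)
\medskip

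\noindent\textbf{A proof strategy.}
Since $\sum_{i=1}^{n}\hat f(\{i\}) = \EE[f(\mathbf{x})(x_{1}+\cdots+x_{n})]$ and $\ell(\mathbf{x})=x_{1}+\cdots+x_{n}$ has degree~$1$, the quantity to be bounded sees only the first Fourier level of $f$, and the natural first move is to reduce to structured~$f$. Symmetrization is tempting: $f\mapsto\Sym(f)$ preserves $\sum_{i}\hat f(\{i\})$, does not increase the degree, and keeps $\|\cdot\|_{\infty}\le1$, and for monotone $f$ the operation of Lemma~\ref{Lem:Sym-Monotone} preserves $\Inf[f]=\sum_{i}\hat f(\{i\})$. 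However, this route \emph{cannot} succeed, because it forgets Booleanity: for a \emph{bounded} symmetric function $g$ of degree~$d$ the best available estimate is $\sum_{i}\hat g(\{i\})=\EE[(x_{1}+\cdots+x_{n})g]\le d$ (the lemma of Section~\ref{sec:sub:symmetric}), which is tight up to sign, since for $g=T_{d}(\tfrac{x_{1}+\cdots+x_{n}}{n})$ with $d$ odd one has $|\sum_{i}\hat g(\{i\})|\to|T_{d}'(0)|=d$; and for bounded \emph{monotone} functions Theorem~\ref{thm:monotone} gives the tight bound $\tfrac{d}{2\pi}(1+o(1))$. So any proof must exploit $f^{2}\equiv1$ in a way invisible to every convex relaxation of the problem, and beating the generic linear-in-$d$ bound by a full square root is exactly the difficulty that keeps Conjecture~\ref{conj:GopalanServedio} open.

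The approach I would pursue is to symmetrize \emph{within the class of Boolean functions}, via combinatorial shifting. Concretely, I would look for compression operators on functions $\{-1,1\}^{n}\to\{-1,1\}$ that (i)~never increase $\deg f$, (ii)~never decrease $\sum_{i}\hat f(\{i\})$ (or a potential dominating it), and (iii)~drive $f$ toward a symmetric Boolean function, i.e.\ a threshold $\mathbf{1}[x_{1}+\cdots+x_{n}\ge2k-n]$. Every non-constant threshold has degree exactly~$n$ (one computes $\hat f([n])=\pm 2^{-n+1}\binom{n-1}{k-1}\neq0$ using $\sum_{w<k}(-1)^{w}\binom{n}{w}=(-1)^{k-1}\binom{n-1}{k-1}$) and satisfies $\sum_{i}\hat f(\{i\})=n\binom{n-1}{k-1}2^{-(n-1)}$, so that $\bigl(\sum_{i}\hat f(\{i\})\bigr)^{2}/\deg f=n\binom{n-1}{k-1}^{2}2^{-2(n-1)}$, which is maximized at $k=\lceil n/2\rceil$ and tends there to $2/\pi<1$; thus the reduction would yield the conjectured bound, with majority extremal. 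The hard part will be step~(ii): ordinary first-coordinate compressions are not obviously degree-non-increasing, so one needs a cleverer symmetrizing move — perhaps an averaging operator in the spirit of Lemma~\ref{Lem:Sym-Monotone} followed by a rounding step restoring the $\pm1$ range on almost all inputs — and controlling $\sum_{i}\hat f(\{i\})$ under it appears essentially as hard as the original question.

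A complementary line would be to prove the contrapositive quantitatively: if $(\sum_{i}\hat f(\{i\}))^{2}=m$ then $\deg f=\Omega(m)$, which would already give $\sum_{i}\hat f(\{i\})=O(\sqrt{\deg f})$. Since $\sum_{i}\hat f(\{i\})^{2}\le1$ for Boolean $f$, Cauchy--Schwarz forces $f$ to have $\Omega(m)$ relevant coordinates carrying a roughly flat degree-one profile, and for a \emph{Boolean} function such a profile should force $f$ to behave, on those coordinates, like majority, hence to have degree $\Omega(m)$. Making this rigorous would presumably combine a random restriction to the heavy coordinates with a Friedgut--Kalai--Naor-type stability statement, and I expect that stability step to be the bottleneck: the classical FKN theorem is informative only when the level-one Fourier weight is close to~$1$, whereas here it can be an arbitrary constant (it equals $2/\pi$ for majority), so one seems to need a robust form of FKN valid at constant level-one weight.
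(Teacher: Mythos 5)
The statement you were asked about is Conjecture~\ref{conj:GopalanServedio}, which the paper does \emph{not} prove: it is cited as an open problem of Gopalan and Servedio, and the paper's own contribution in its vicinity is only the bound $\Inf[f]\le \frac{d}{2\pi}(1+o(1))$ for bounded monotone functions (Theorem~\ref{thm:monotone}), together with the remark that this is ``still very far from the conjectured bound'' $\sqrt{d}$. Your submission, accordingly, is not a proof and does not claim to be one; it is a strategy discussion that itself acknowledges the conjecture is open. So there is no question of comparing it with ``the paper's proof'' --- none exists --- and the honest verdict is that the statement remains unproven by both you and the authors.

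As a strategy sketch, your text is internally sound where it makes concrete claims: the identity $\sum_i\hat f(\{i\})=\EE[f\cdot(x_1+\cdots+x_n)]$, the fact that symmetrization preserves this quantity while keeping degree and sup-norm under control, the tightness of the linear-in-$d$ bound for bounded symmetric functions via odd Chebyshev polynomials ($T_d'(0)=\pm d$, as in Section~\ref{sec:tightness}), and the threshold computation giving $\bigl(\sum_i\hat f(\{i\})\bigr)^2/\deg f\to 2/\pi$ for majority are all correct, and your observation that any convex relaxation discarding Booleanity cannot beat $\Theta(d)$ is exactly why the problem is hard. But both of your proposed routes stop at precisely the steps that constitute the open problem: (ii) in the shifting approach asks for a Booleanity-preserving symmetrizing operation that never increases degree and never decreases the level-one sum, and no such operation is known (standard compressions control neither the degree nor the level-one mass in the required direction); and the ``robust FKN at constant level-one weight'' needed for your contrapositive route is itself an unproven stability statement of comparable difficulty. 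In short: no gap in your reasoning as stated, but also no proof --- the missing ingredients you name are the conjecture's actual content.
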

In the case of monotone functions, we have $\sum_{i=1}^n \hat f(\{i\})= \Inf[f]$, and thus,
the conjecture asks for an upper bound on the influence in terms of the degree. A recent
result of Scheder and Tan~\cite{ST13} implies the upper bound
$\Inf[f] \leq \ln 2 \cdot \deg[f] (1+o(1))$. Our Theorem~\ref{thm:monotone} yields a slightly
stronger upper bound of $\Inf[f] \leq \frac{1}{2\pi} \cdot \deg[f] (1+o(1))$. However,
this is still very far from the conjectured bound.

{\blue
\subsection{Application to transitive-invariant functions} \label{sec:transitive-invariant}

A function $f\colon \{-1,1\}^n \to \RR$ is called \emph{transitive-invariant} if for every $i,j \in [n]$ there exists a permutation $\sigma \in S_n$ such that $\sigma(i)=j$ and  $f(x_1,\ldots,x_n) = f(x_{\sigma(1)},\ldots,x_{\sigma(n)})$ for every $x=(x_1,\ldots,x_n) \in \{-1,1\}^n$. Note that if $f$ is transitive-invariant, then for every $p$, the influences $\Inf_i^{(p)}[f]$ are all equal.

\begin{proposition}
Every transitive-invariant function $f\colon\{-1,1\}^n\to[-1,1]$ of degree $d$ satisfies, for all $1 \leq p \leq 2$,
\[ \Inf^{(p)}[f] \leq \frac{d^{2p}e^{pd}}{n^{p-1}}. \]
In particular, $\Var[f] \leq \Inf^{(2)}[f] \leq \frac{d^{4}e^{2d}}{n}$.
\end{proposition}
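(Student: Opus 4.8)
The plan is to use transitive-invariance to reduce everything to a single coordinate, feed in the bound $\Inf^{(1)}[f]\le d^2$ from Theorem~\ref{thm:d2}, and then spend a factor $e^d$ to upgrade from the $L^1$ norm of $f_1$ to its $L^2$ norm via a low-degree hypercontractive estimate.

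First I would record what transitive-invariance buys us. Since for every $i,j$ there is a coordinate permutation fixing $f$ and sending $i$ to $j$, all the $\Inf_i^{(p)}[f]$ are equal, so $\Inf^{(p)}[f]=n\,\Inf_1^{(p)}[f]=n\|f_1\|_p^p$ for every $p$. Taking $p=1$ and applying Theorem~\ref{thm:d2} gives $\|f_1\|_1=\Inf_1^{(1)}[f]=\tfrac1n\Inf^{(1)}[f]\le d^2/n$. Since $f_1=\sum_{S\ni 1}\hat f(S)\chi_S$ has degree at most $d$, it now suffices to control $\|f_1\|_p$ for $1\le p\le 2$ by $\|f_1\|_1$ up to an $n$-independent factor; as $\|f_1\|_p\le\|f_1\|_2$ on a probability space, it is enough to bound $\|f_1\|_2$.

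The crux is the following low-degree hypercontractive inequality: if $g$ has degree at most $d$ then $\|g\|_2\le e^d\|g\|_1$. I would deduce it from Bonami's $(2,q)$-hypercontractivity, $\|g\|_q\le(q-1)^{d/2}\|g\|_2$ for $q\ge 2$, combined with the log-convexity of $L^r$ norms. Interpolating $L^2$ between $L^1$ and $L^q$ (with $q>2$) gives $\|g\|_2\le\|g\|_1^{(q-2)/(2(q-1))}\|g\|_q^{q/(2(q-1))}$; substituting Bonami and rearranging yields $\|g\|_2\le(q-1)^{dq/(2(q-2))}\|g\|_1$ for every $q>2$. Because $\tfrac{q}{2(q-2)}\ln(q-1)$ decreases to $1$ as $q\to 2^{+}$, the infimum over $q$ of the constant is exactly $e^d$, giving $\|g\|_2\le e^d\|g\|_1$. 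Obtaining the clean base $e$ here, rather than a larger constant coming from a crude $(2,4)$ estimate, is the only delicate point; alternatively this inequality can simply be quoted from the literature.

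Putting the pieces together, for $1\le p\le 2$ we get $\|f_1\|_p\le\|f_1\|_2\le e^d\|f_1\|_1\le e^d d^2/n$, hence
\[
\Inf^{(p)}[f]=n\|f_1\|_p^p\le n\Bigl(\frac{e^d d^2}{n}\Bigr)^{p}=\frac{d^{2p}e^{pd}}{n^{p-1}}.
\]
Specializing to $p=2$ gives $\Inf^{(2)}[f]\le d^4 e^{2d}/n$, and since $\Inf^{(2)}[f]=\sum_S|S|\hat f(S)^2\ge\sum_{S\neq\emptyset}\hat f(S)^2=\Var[f]$, the final assertion follows.
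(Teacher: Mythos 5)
Your argument is correct and follows essentially the same route as the paper: equalize the influences by transitive-invariance, invoke $\Inf^{(1)}[f]\leq d^2$ from Theorem~\ref{thm:d2}, and upgrade $\|f_i\|_1$ to $\|f_i\|_2$ (hence to $\|f_i\|_p$ for $p\leq 2$) at a cost of $e^{d}$ via low-degree hypercontractivity. The only difference is presentational: the paper simply cites the inequality $\|g\|_2\leq e^{d}\|g\|_1$ for degree-$d$ functions (O'Donnell, Theorem~9.22), whereas you rederive it from Bonami's $(2,q)$-hypercontractivity by interpolation and a limit $q\to 2^{+}$, which is a correct and self-contained substitute.
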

\begin{proof}
Using hypercontractivity (See \cite[Theorem 9.22]{ODonnellBook}), we have
\[
\Inf^{(p)}[f] =\sum_{i=1}^{n}\|f_i\|_{p}^{p} \leq \sum_{i=1}^{n}\|f_i\|_{2}^{p} \leq e^{pd}\sum_{i=1}^{n}\|f_i\|_{1}^{p}=e^{pd}\sum_{i=1}^{n}\left(\Inf_{i}^{(1)}[f]\right)^p
 =e^{pd}\frac{\left(\Inf^{(1)}[f]\right)^p}{n^{p-1}}\leq \frac{d^{2p}e^{pd}}{n^{p-1}}. \qedhere
\]
\end{proof}

This improves on the bound $\Var[f] \leq \frac{e^{O(d)}}{\sqrt{n}}$ proved by Dinur~et~al.~\cite{DFKO}. Since this bound doesn't appear explicitly in~\cite{DFKO}, let us briefly explain how to obtain it from~\cite[Theorem~7]{DFKO}. Putting $J = \emptyset$ and $t = 2$ in the theorem, it states that if $\Var[f] \geq \epsilon$ and $\Inf^{(2)}_i[f] \leq \epsilon^2 C^{-d}/4$ for all $i$ then $\Pr[|f|\geq 2] > 0$, where $C > 0$ is some universal constant. Since $\|f\|_\infty \leq 1$, we deduce that $\Inf^{(2)}[f] = n\Inf^{(2)}_i[f] > n\epsilon^2 C^{-d}/4$, and so $n\epsilon^2 C^{-d}/4 < d$, implying the claimed bound.
}

\section{Tight examples} \label{sec:tightness}

Following Ba\v{c}kurs and Bavarian~\cite{BB}, we conjecture that the total influence of a function $f\colon \{-1,1\}^n \to [-1,1]$ of degree $d$ is at most $d$. In this section we discuss several examples of functions $f$ which achieve or almost achieve this bound.

\paragraph{Boolean homogeneous functions attaining the bound} Proposition~\ref{Prop:Hom-Bool} shows that any function $f$ that is Boolean and homogeneous has total influence exactly $d$.
The quintessential example of such a function is a Fourier character of degree $d$, that is $\chi_S$ for some set $S \subseteq [n]$ of cardinality $|S| = d$. Another example (with $d = 2$) is the function
\[ f_4(x,y,z,w) = \frac{x(z+w) + y(z-w)}{2}. \]
For an arbitrary degree $d \geq 2$, the function $f_4(x_1,x_2,x_3,x_4)x_5\cdots x_{d+2}$ has total influence $d$. This shows that even when $f$ is Boolean and homogeneous, characters are not the unique functions having total influence $d$.

\paragraph{Non-Boolean functions attaining the bound} The following two quadratic functions satisfy $\|f\|_\infty = 1$ and $\Delta f \equiv 2$, and in particular have total influence~$2$:
\begin{align*}
 s(x,y,z,w) &= \frac{xy-zw}{2} + \frac{\sqrt{2}-1}{8}(xz+yw), \\
 t(x,y,z,w) &= \frac{xy-zw}{2} + \frac{\sqrt{2}-1}{16}(x+y)(z+w).
\end{align*}

\paragraph{Symmetric functions almost attaining the bound} Let $p\colon [-1,1] \to [-1,1]$ be a polynomial of degree $d$, and consider the corresponding symmetric function $f(\mathbf{x}) = p(\frac{x_1+\cdots+x_n}{n})$. For large $n$ we have
\[
 \Inf[f] = n\Inf_n[f] \approx n \left| \frac{p(\tfrac{1}{n}) - p(-\tfrac{1}{n})}{2} \right| \approx |p'(0)|.
\]
The Bernstein--Markov theorem (Proposition~\ref{pro:bernstein}) shows that $p'(0) \leq d$. When $d$ is odd, setting $p$ to the Chebyshev polynomial $T_d$ we have $p'(0) = d$, and as $n\to\infty$, the estimates above can be made precise to show that $\Inf[f] \to d$. One could wonder whether these functions provide a counter-example to the conjecture that $\Inf[f] \le \deg[f]$. However, it is not difficult to see that in a deleted neighborhood of $0$, $T_d'(0) < d$, and so for large $n$ the estimates show that  $\Inf[f] < d$, that is, the limit is approached from below. Numerical experiments suggest that $\Inf[f] < d$ holds also for small $n$.

\section{Conjectures} \label{sec:future}

In this section we discuss two directions for improving our results. The first direction aims at improving Theorem~\ref{thm:d2} to a bound of $O(d^{3/2})$ on the total influence. The second direction aims at improving Theorem~\ref{thm:dlogd} to a bound of $O(d)$ on the total influence of homogeneous functions.

\subsection{General functions} \label{sec:general}

We start by proving an $O(d^{3/2})$ bound on the total influence of homogeneous functions. While Theorem~\ref{thm:dlogd} provides a better upper bound of $O(d\log d)$, this new method could potentially extend to general functions. The proof uses Sarantopoulos's extension (Proposition~\ref{pro:sar}) of the Markov--Bernstein theorem. (We could also use the classical Bernstein's theorem.)


\begin{theorem} \label{thm:d32}
 Let $f\colon \{-1,1\}^n \to [-1,1]$ be a homogeneous function of degree $d$. Then
 \[ \Inf[f] \leq \|\Delta(f)\|_\infty \leq O(d^{3/2}). \]
\end{theorem}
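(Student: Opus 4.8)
The plan is to mimic the proof of Theorem~\ref{thm:dlogd}, but replace Harris's sharp $O(d\log d)$ estimate with a cruder estimate that nevertheless only loses a factor of $\sqrt d$ over the optimum, while being robust enough that the same type of argument might conceivably apply to non-homogeneous functions. As in Theorem~\ref{thm:dlogd}, it suffices to bound $|\Delta(f)(\mathbf 1)|$. Writing $S = \{i : f_i(\mathbf 1) \ge 0\}$ and $g(x,y) = f(x,\dots,x,y,\dots,y)$ (with $x$ in the coordinates of $S$ and $y$ in $\comp S$), homogeneity and boundedness of the multilinear extension give $|g(x,y)| \le \max(|x|,|y|)^d$, and as computed there, $|\Delta(f)(\mathbf 1)| = \partial_x g(1,1) + \partial_y g(1,1)$.

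First I would reduce to a genuinely low-dimensional statement: it is enough to show that any bivariate polynomial $g(x,y)$ of degree $d$ with $|g(x,y)| \le \max(|x|,|y|)^d$ satisfies $|\partial_x g(1,1)| + |\partial_y g(1,1)| = O(d^{3/2})$. The natural approach is to view $g$ as a polynomial on $\RR^2$ bounded by $1$ on the unit ball of $(\RR^2,\|\cdot\|_\infty)$ — exactly the setting of Proposition~\ref{pro:sar} — but the point $(1,1)$ lies on the \emph{boundary} of that ball, where Sarantopoulos gives only the useless bound $d^2$. The key idea is therefore to apply Sarantopoulos not at $(1,1)$ but at a slightly shrunk point $(1-\delta, 1-\delta)$ with $\delta \approx 1/d$, where $\frac{d}{\sqrt{1-\|x\|_\infty^2}} = \frac{d}{\sqrt{1-(1-\delta)^2}} = O(d\sqrt{d}) = O(d^{3/2})$, and then control the error incurred in moving the derivative from $(1-\delta,1-\delta)$ back to $(1,1)$.

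The main technical step — and the main obstacle — is this transfer of the derivative estimate from the interior point to the boundary point. The cleanest way is to restrict $g$ to the diagonal-type line through $(1,1)$: set $h(\epsilon) = g(1+a\epsilon, 1+b\epsilon)$ for appropriate signs $a,b \in \{\pm1\}$ chosen so that $h'(0) = |\partial_x g(1,1)| + |\partial_y g(1,1)| = |\Delta(f)(\mathbf 1)|$, and observe that $|h(\epsilon)| \le (1+|\epsilon|)^d$ (as in Theorem~\ref{thm:dlogd}). I then need a Bernstein-type bound $|h'(0)| = O(d^{3/2})$ for a univariate polynomial of degree $d$ satisfying $|h(\epsilon)| \le (1+|\epsilon|)^d$. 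This follows by applying the classical Bernstein inequality (Proposition~\ref{pro:bernstein}), suitably rescaled, on the interval $[-\delta,\delta]$ with $\delta = 1/d$: on that interval $|h| \le (1+\delta)^d = O(1)$, and Bernstein's inequality at the center of an interval of half-length $\delta$ gives $|h'(0)| \le \frac{d}{\delta} \cdot O(1) = O(d^2)$ — which is not yet good enough, so one must instead use the endpoint-sharpened form $\frac{d}{\sqrt{1-x^2}}$ after mapping $[-\delta,\delta]$ to $[-1,1]$, evaluating at the image of $0$, which is the center, giving only $d/\delta$ again. The resolution is to choose the interval \emph{asymmetrically}, say $[-c/d,\, c/\sqrt d]$ or to iterate: apply Bernstein once to pass from the $(1+|\epsilon|)^d$ bound to a bound on $h'$ on a smaller interval, then again; this is precisely the two-step Markov--Bernstein bootstrap already used in the symmetric-function arguments, and it yields $|h'(0)| = O(d^{3/2})$ after optimizing the interval lengths. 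Carrying out this optimization carefully, and checking that the multilinear extension really is bounded by $\max(|x|,|y|)^d$ off the cube (which uses homogeneity in an essential way, just as in Theorem~\ref{thm:dlogd}), completes the argument.
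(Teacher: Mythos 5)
There is a genuine gap, and it sits exactly at the step you flag as "the main technical step": the claimed elementary bound $|h'(0)|=O(d^{3/2})$ for a degree-$d$ polynomial $h$ with $|h(\epsilon)|\leq(1+|\epsilon|)^d$ on all of $\RR$. The tools you propose cannot deliver it. Any argument that uses the envelope only on a single interval $[-a,b]\ni 0$ together with Bernstein at $0$ gives $|h'(0)|\leq \frac{d}{\sqrt{ab}}\,(1+\max(a,b))^d$; to keep the envelope factor $O(1)$ you need $\max(a,b)=O(1/d)$, whence $d/\sqrt{ab}=\Omega(d^2)$, so asymmetric intervals do not help (and your specific choice $[-c/d,\,c/\sqrt d]$ is worse, since the envelope on it is $\approx e^{c\sqrt d}$). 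The "two-step Markov--Bernstein bootstrap" from the symmetric-function section does not transfer either: there the gain came from a dichotomy guaranteeing a zero of $p'$ near the evaluation point, and no such structure is available for $h$ (the extremal $h$ has $h'(0)\asymp d\log d$, and beating $d^2$ for this envelope problem genuinely requires multi-scale information of the Harris type). Of course $|h'(0)|=O(d\log d)\leq O(d^{3/2})$ follows from Proposition~\ref{pro:har}, but invoking it would just re-prove Theorem~\ref{thm:dlogd}, defeating your stated purpose of replacing Harris by a cruder, more robust estimate.

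The irony is that your first instinct --- apply Sarantopoulos at the shrunk point $(1-\delta)\mathbf{1}$ with $\delta\approx 1/d$, where the bound is $d/\sqrt{1-(1-\delta)^2}=O(d^{3/2})$, and then transfer back to the boundary --- is exactly the paper's proof; you only miss the (one-line) transfer mechanism. Since $f$ is homogeneous of degree $d$, each $f_i$ scales as $f_i(\alpha\mathbf{x})=\alpha^d f_i(\mathbf{x})$, so $\Delta(f)(\alpha\mathbf{x})=\alpha^d\,\Delta(f)(\mathbf{x})$; equivalently $T_\alpha f=\alpha^d f$ and $(T_\alpha f)(\mathbf{x})=f(\alpha\mathbf{x})$ with $\|T_\alpha f\|_\infty\leq 1$. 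Taking $\alpha=1-1/d$, Proposition~\ref{pro:sar} at the interior point $\alpha\mathbf{x}$ gives $\Delta(f)(\alpha\mathbf{x})\leq d/\sqrt{1-\alpha^2}=O(d^{3/2})$, and homogeneity undoes the shrinking at cost $\alpha^{-d}=(1-1/d)^{-d}=O(1)$, yielding $\|\Delta(f)\|_\infty=O(d^{3/2})$. So homogeneity must be used not only for the envelope bound $|g(x,y)|\leq\max(|x|,|y|)^d$ (where you do use it) but also, crucially, for the interior-to-boundary transfer; once you use it there, the detour through the univariate extremal problem is unnecessary.
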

\begin{proof}
 Let $\alpha = 1 - 1/d$, and define $g = T_\alpha f$. Note that $g(\mathbf{x}) = f(\alpha \mathbf{x})$ and similarly $\Delta(g)(\mathbf{x}) = \Delta(f)(\alpha\mathbf{x})$. Since $|\alpha| \leq 1$, the interpretation of $T_\alpha f$ as an averaging operator shows that $\|g\|_\infty \leq \|f\|_\infty \leq 1$. As in the proof of Theorem~\ref{thm:d2}, Proposition~\ref{pro:sar} shows that for all $\mathbf{x} \in \{-1,1\}^n$,
 \[ \Delta(g)(\mathbf{x}) = \Delta(f)(\alpha\mathbf{x}) \leq \frac{d}{\sqrt{1-\alpha^2}}. \]
 Since $g$ is homogeneous, $\Delta(g)(\mathbf{x}) = \alpha^d \Delta(f)(\mathbf{x})$, and so
 \[ \Delta(f)(\mathbf{x}) \leq \alpha^{-d} \frac{d}{\sqrt{1-\alpha^2}} = O(d^{3/2}). \qedhere \]
\end{proof}

When $f$ is not homogeneous, we can try to fix the argument as follows.
\begin{lemma} \label{lem:d32}
 Let $f\colon \{-1,1\}^n \to [-1,1]$ be a function of degree $d$. For all $\alpha \in [-1,1]$ we have
\[ \Inf[f] \leq \max_{i \in [n]} \frac{\|f_i\|_1}{\|T_\alpha f_i\|_1} \frac{d}{\sqrt{1-\alpha^2}}. \]
\end{lemma}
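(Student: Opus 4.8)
The plan is to mimic the proof of Theorem~\ref{thm:d32}, but keep track of the loss incurred when passing from $f$ to $g = T_\alpha f$ coordinate by coordinate, rather than using homogeneity to relate $\Delta(g)$ and $\Delta(f)$ globally. First I would fix $\alpha \in [-1,1]$ and set $g = T_\alpha f$. Exactly as in Theorem~\ref{thm:d2}, for each $\mathbf{x} \in \{-1,1\}^n$ there is a sign vector $\mathbf{y} = \mathbf{y}(\mathbf{x}) \in \{-1,1\}^n$ with $\Delta(g)(\mathbf{x}) = Dg(\mathbf{x})\mathbf{y} = \sum_i y_i g_i(\mathbf{x})/x_i$; viewing $[-1,1]^n$ as the unit ball of $(\RR^n,\|\cdot\|_\infty)$ and using that $\|g\|_\infty \le 1$ (since $|\alpha|\le 1$ and $T_\alpha$ is an averaging operator), Proposition~\ref{pro:sar} gives $|\Delta(g)(\mathbf{x})| \le d/\sqrt{1-\|\mathbf{x}\|_\infty^2}$. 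But for $\mathbf{x}\in\{-1,1\}^n$ we have $\|\mathbf{x}\|_\infty = 1$, so this naive substitution is vacuous — the point $\mathbf{x}$ lies on the boundary of the ball. The fix, as in Theorem~\ref{thm:d32}, is to evaluate at $\alpha\mathbf{x}$ instead: since $g(\mathbf{x}) = f(\alpha\mathbf{x})$ we get $\Delta(g)(\mathbf{x}) = \Delta(f)(\alpha\mathbf{x})$, and more importantly $g_i(\mathbf{x}) = (T_\alpha f)_i(\mathbf{x})$ while the derivative identity lets us write $\Delta(T_\alpha f)(\mathbf{x}) \le d/\sqrt{1-\alpha^2}$ by applying Proposition~\ref{pro:sar} to $g$ at the interior point... actually the cleanest route is: apply Proposition~\ref{pro:sar} to $f$ itself at the point $\alpha\mathbf{x}$, which has $\|\alpha\mathbf{x}\|_\infty = |\alpha| < 1$, yielding $|Df(\alpha\mathbf{x})\mathbf{y}| \le d/\sqrt{1-\alpha^2}$ for every $\mathbf{y}\in\{-1,1\}^n$, hence $\Delta(f)(\alpha\mathbf{x}) \le d/\sqrt{1-\alpha^2}$ for all $\mathbf{x}\in\{-1,1\}^n$.

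The remaining and main step is to recover a bound on $\Inf[f] = \sum_i \|f_i\|_1$ from the pointwise bound on $\sum_i |f_i(\alpha\mathbf{x})|$. Here is where the lemma's ratio $\|f_i\|_1/\|T_\alpha f_i\|_1$ enters. Observe that $f_i(\alpha\mathbf{x})$ is (up to the rescaling by $\alpha$ on the coordinates) exactly $(T_\alpha f_i)(\mathbf{x})$ — more precisely, since $f_i = \sum_{S\ni i}\hat f(S)\chi_S$, we have $f_i(\alpha\mathbf{x}) = \sum_{S\ni i}\hat f(S)\alpha^{|S|}\chi_S(\mathbf{x}) = (T_\alpha f_i)(\mathbf{x})$. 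Therefore
\[
\sum_{i=1}^n |(T_\alpha f_i)(\mathbf{x})| = \sum_{i=1}^n |f_i(\alpha\mathbf{x})| = \Delta(f)(\alpha\mathbf{x}) \le \frac{d}{\sqrt{1-\alpha^2}}
\]
for every $\mathbf{x}\in\{-1,1\}^n$. Taking expectations over $\mathbf{x}$ gives $\sum_{i=1}^n \|T_\alpha f_i\|_1 \le d/\sqrt{1-\alpha^2}$. Finally, for each $i$ write $\|f_i\|_1 = \frac{\|f_i\|_1}{\|T_\alpha f_i\|_1}\|T_\alpha f_i\|_1 \le \bigl(\max_j \frac{\|f_j\|_1}{\|T_\alpha f_j\|_1}\bigr)\|T_\alpha f_i\|_1$ (with the convention that terms where $T_\alpha f_i \equiv 0$ also have $f_i \equiv 0$ by degree considerations, since $i$ appearing in some monomial of $f$ means $\|T_\alpha f_i\|_2^2 = \sum_{S\ni i}\alpha^{2|S|}\hat f(S)^2 > 0$), and sum over $i$:
\[
\Inf[f] = \sum_{i=1}^n \|f_i\|_1 \le \max_{i\in[n]}\frac{\|f_i\|_1}{\|T_\alpha f_i\|_1}\sum_{i=1}^n \|T_\alpha f_i\|_1 \le \max_{i\in[n]}\frac{\|f_i\|_1}{\|T_\alpha f_i\|_1}\cdot\frac{d}{\sqrt{1-\alpha^2}}.
\]

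The step I expect to require the most care is the identification $f_i(\alpha\mathbf{x}) = (T_\alpha f_i)(\mathbf{x})$ together with making sure Proposition~\ref{pro:sar} is applied at an interior point; both are routine once one notices that evaluating the multilinear extension at $\alpha\mathbf{x}$ is the same as applying $T_\alpha$, which is already used in Theorem~\ref{thm:d32}. The genuine conceptual content — and the reason this is only a lemma and not yet a theorem — is that the ratio $\max_i \|f_i\|_1/\|T_\alpha f_i\|_1$ is in general as large as $\alpha^{-d} = (1-1/d)^{-d} = \Theta(1)$ only when $f_i$ is (nearly) homogeneous of top degree $d$; for general $f$ the low-degree part of $f_i$ is damped much less by $T_\alpha$, so one hopes the ratio is smaller, but controlling it for arbitrary bounded $f$ is exactly the open problem. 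Thus the lemma isolates cleanly what remains to be proved; the proof itself is short and the only obstacle is bookkeeping.
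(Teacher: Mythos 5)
Your proof is correct and follows essentially the same route as the paper: bound $\Delta(T_\alpha f)(\mathbf{x}) = \Delta(f)(\alpha\mathbf{x}) \le d/\sqrt{1-\alpha^2}$ via Proposition~\ref{pro:sar} at the interior point $\alpha\mathbf{x}$ (exactly as in Theorem~\ref{thm:d32}), use $g_i = T_\alpha f_i$, and then bound $\Inf[f]$ by pulling out the maximal ratio $\|f_i\|_1/\|T_\alpha f_i\|_1$ against $\Inf[T_\alpha f]$. The extra care you take about the degenerate case $T_\alpha f_i \equiv 0$ and the explicit identity $f_i(\alpha\mathbf{x}) = (T_\alpha f_i)(\mathbf{x})$ are fine details the paper leaves implicit; nothing is missing.
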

\begin{proof}
 Fix $\alpha$, and let $g = T_\alpha f$. As in Theorem~\ref{thm:d32},
 \[ \|\Delta(g)\|_\infty \leq \frac{d}{\sqrt{1-\alpha^2}}. \]
 On the other hand, as $g_i=T_\alpha f_i$,
\[
 \Inf[f] = \sum_{i=1}^n \|f_i\|_1 = \sum_{i=1}^n \frac{\|f_i\|_1}{\|T_\alpha f_i\|_1} \|g_i\|_1 \leq \left( \max_{i \in [n]} \frac{\|f_i\|_1}{\|T_\alpha f_i\|_1} \right) \Inf[g] \leq \left(\max_{i \in [n]} \frac{\|f_i\|_1}{\|T_\alpha f_i\|_1}\right) \frac{d}{\sqrt{1-\alpha^2}}. \qedhere
\]
\end{proof}

This prompts the following definition.
\begin{definition} \label{def:const}
 Let $d \geq 1$ and $\alpha \in [-1,1]$. Define
\[
 C_{d,\alpha} = \sup_f \frac{\|f\|_1}{\|T_\alpha f\|_1},
\]
 where the supremum ranges over all $n$ and all functions $f\colon \{-1,1\}^n \to \RR$ of degree at most $d$.
\end{definition}

We can restate the conclusion in Lemma~\ref{lem:d32} as follows:
\[ \Inf[f] \leq C_{d,\alpha} \frac{d}{\sqrt{1-\alpha^2}}. \]
In particular, if $C_{d,1-1/d} = O(1)$ then $\Inf[f] = O(d^{3/2})$.

The best bound on $C_{d,\alpha}$ we can prove is the following.

\begin{lemma} \label{lem:Cdalpha}
 For all functions $f\colon \{-1,1\}^n \to \RR$ of degree $d$ and all $\alpha \in (0,1]$,
\[
 \|T_\alpha f\|_1 \geq \alpha^{\min(d^2,n)} \|f\|_1.
\]
 In particular, $C_{d,\alpha} \leq \alpha^{-d^2}$.
\end{lemma}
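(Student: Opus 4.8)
The plan is to reduce to a single coordinate at a time, but the cleaner route is to exploit the explicit formula for $T_\alpha$ in terms of averaging together with a polynomial interpolation argument. First I would observe that it suffices to prove the bound $\|T_\alpha f\|_1 \ge \alpha^{\min(d^2,n)}\|f\|_1$, since the statement about $C_{d,\alpha}$ is then immediate from Definition~\ref{def:const}. For the case $n \le d^2$ (which is really the generic case, as we will see), I would use the one-dimensional Bernstein--Markov phenomenon encoded in Theorem~\ref{thm:d2}: fix a point $\mathbf{x}\in\{-1,1\}^n$ where $|T_\alpha f|$ is smallest relative to $|f|$, consider the univariate polynomial $t\mapsto f(t\mathbf{x})$ of degree $d$ (here $t\mathbf{x}$ ranges over the segment from $-\mathbf{x}$ to $\mathbf{x}$), and note $T_\alpha f(\mathbf{x}) = f(\alpha\mathbf{x})$. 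So pointwise $\frac{|T_\alpha f(\mathbf{x})|}{|f(\mathbf{x})|}$ is the ratio $|q(\alpha)|/|q(1)|$ for a degree-$d$ polynomial $q$ bounded on $[-1,1]$ by... and here is the subtlety: $q$ need not be bounded by anything convenient, only $|q(1)|$ and $|q(-1)|$ are controlled. So a direct pointwise comparison is too weak.

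Instead I would work globally. Write $T_\alpha f = \sum_S \alpha^{|S|}\hat f(S)\chi_S$ and factor the operator coordinate by coordinate: $T_\alpha = T_\alpha^{(1)}\cdots T_\alpha^{(n)}$, where $T_\alpha^{(i)}$ acts only on the $i$th variable as $g \mapsto \frac{1+\alpha}{2}g(\mathbf{x}) + \frac{1-\alpha}{2}g(\mathbf{x}\oplus e_i)$. For a single coordinate, decompose $g = g^{=i} + g^{\neq i}$ into the part of the Fourier expansion containing $x_i$ and the part not containing it; then $T_\alpha^{(i)} g = \alpha g^{=i} + g^{\neq i}$. The key one-variable estimate I want is $\|T_\alpha^{(i)} g\|_1 \ge \alpha\|g\|_1$ whenever $g$ has positive degree in $x_i$ — but that is false in general for $L_1$. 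What \emph{is} true, and what I would prove, is a bound that loses a factor of $\alpha$ only when $g$ genuinely depends on $x_i$, accumulated multiplicatively over the $\min(d^2,n)$ "effective" coordinates. Concretely: since $f$ has degree $d$, by the hypercontractive/Bernstein-type reasoning behind Theorem~\ref{thm:d2} the function $f$ can depend nontrivially (in the sense that matters for the $L_1$ loss) on effectively at most $d^2$ coordinates' worth of "sensitivity"; more precisely, I would iterate the one-coordinate inequality $\|T_\alpha^{(i)}g\|_1 \ge \alpha\|g\|_1$ — which \emph{does} hold because $T_\alpha^{(i)}g = \frac{1+\alpha}{2}g + \frac{1-\alpha}{2}R_i g$ where $R_i$ is the $\pm1$-measure-preserving reflection, so $\|T_\alpha^{(i)} g\|_1 \ge \frac{1+\alpha}{2}\|g\|_1 - \frac{1-\alpha}{2}\|g\|_1 = \alpha\|g\|_1$ by the triangle inequality — and apply it $\min(d^2,n)$ times, using that applying it to a coordinate on which (the current partially-averaged) function does not depend costs nothing (the factor is $1$, not $\alpha$). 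The bound $\min(d^2,n)$ on the number of lossy steps comes from Theorem~\ref{thm:d2}: the number of coordinates $i$ with $\Inf_i[f]>0$ is at most $\Inf[f]$ over the minimum single-variable influence, but cleaner is simply that there are at most $n$ coordinates and, after symmetrizing, the total influence $\le d^2$ forces the relevant count down to $d^2$.

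The main obstacle, and the step I expect to need the most care, is exactly the counting: making rigorous why only $\min(d^2,n)$ of the $n$ coordinate-averaging steps incur the factor $\alpha$. The clean argument is: the trivial bound $\|T_\alpha^{(i)}g\|_1 \ge \alpha\|g\|_1$ applied to all $n$ coordinates already gives $\|T_\alpha f\|_1 \ge \alpha^n\|f\|_1$, handling the case $n \le d^2$. For $n > d^2$, I would first pass to the $m$-coordinate symmetrization or, more directly, use Theorem~\ref{thm:d2} which gives $\Delta(f) \le d^2$ pointwise and hence an a priori structural restriction; alternatively, and I think most robustly, I would note that $f$ of degree $d$ has at most $\binom{n}{\le d}$ monomials but the relevant quantity is that $T_\alpha$ applied to the degree-$k$ part multiplies it by $\alpha^k$, and $\|f^{=k}\|_1$-type bookkeeping combined with $k \le d$ and the fact that the $L_1$ loss is governed by sensitivity $\le d^2$ yields the exponent $\min(d^2,n)$. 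I would present the $n \le d^2$ case in full (it is the one-line triangle-inequality iteration above) and then reduce the general case to it by symmetrization together with Theorem~\ref{thm:d2}, remarking that for the intended application $\alpha = 1-1/d$ one has $\alpha^{-d^2} = e^{O(d)}$, which is all that is used downstream.
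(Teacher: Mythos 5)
Your first half is fine: the factorization $T_\alpha = T_\alpha^{(1)}\cdots T_\alpha^{(n)}$ together with the triangle-inequality bound $\|T_\alpha^{(i)}g\|_1 \ge \tfrac{1+\alpha}{2}\|g\|_1 - \tfrac{1-\alpha}{2}\|g\|_1 = \alpha\|g\|_1$ does give $\|T_\alpha f\|_1 \ge \alpha^n\|f\|_1$, and this is a perfectly good (arguably more elementary) substitute for the paper's semigroup argument for the exponent $n$.

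The gap is in the exponent $d^2$, which is the actual content of the lemma (it is what makes $C_{d,\alpha}$ finite uniformly in $n$). Your claim that only $\min(d^2,n)$ of the $n$ coordinate-averaging steps are ``lossy'' is false: a degree-$d$ function can depend nontrivially on all $n$ coordinates (e.g.\ $f = x_1+\cdots+x_n$ has degree $1$), so the naive iteration loses a factor $\alpha$ at every coordinate, and none of the devices you gesture at --- symmetrization (which only increases the number of relevant coordinates), monomial counting, appeals to Theorem~\ref{thm:d2} (which anyway requires $\|f\|_\infty \le 1$, not assumed here), or ``sensitivity bookkeeping'' --- converts the coordinate count $n$ into a degree-dependent quantity. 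The paper's mechanism for the $\alpha^{d^2}$ bound is genuinely different and is the missing idea: one bounds the Laplacian in $L_1$ by $\|Lf\|_1 \le d^2\|f\|_1$, by representing the extremal linear functional $P \mapsto P'(1)$ on the space of degree-$d$ univariate polynomials bounded on $[-1,1]$ as $P \mapsto \sum_i c_i P(\alpha_i)$ with $\alpha_i \in [-1,1]$ and $\sum_i |c_i| = d^2$ (Markov's inequality gives the norm); since $\sum_i c_i \alpha_i^k = k$ for $k \le d$, this yields $Lf = \sum_i c_i T_{\alpha_i} f$ on degree-$d$ functions, whence $\|Lf\|_1 \le d^2\|f\|_1$ because each $\|T_{\alpha_i} f\|_1 \le \|f\|_1$. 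One then integrates the differential inequality $\frac{d}{d\delta}\|T_{e^{-\delta}}f\|_1 \ge -d^2 \|T_{e^{-\delta}}f\|_1$ (Gronwall) to get $\|T_\alpha f\|_1 \ge \alpha^{d^2}\|f\|_1$. Without some such degree-based control of $\|Lf\|_1$ (or an equivalent device), your outline cannot reach the stated bound for $n > d^2$.
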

\begin{proof}
 We start by showing that $\|T_\alpha f\|_1 \geq \alpha^n \|f\|_1$. Note first that for all $i \in [n]$, we have $\|f_i\|_1 \leq \|f\|_1$. This follows from
 \[
  |f_i(x)| + |f_i(x \oplus e_i)| = 2\left|\frac{f(x) - f(x\oplus e_i)}{2}\right| \leq |f(x)| + |f(x \oplus e_i)|.
 \]
 Now, it is well-known that
\[
 \left. \frac{dT_{e^{-\epsilon}} f(\mathbf{x})}{d\epsilon} \right|_{\epsilon=0} = -Lf(\mathbf{x}),
\]
 where the Laplacian $Lf$ is given by $Lf = f_1 + \cdots + f_n$. Therefore
\[
 \left. \frac{d\|T_{e^{-\epsilon}} f\|_1}{d\epsilon}  \right|_{\epsilon=0}  \geq -\|Lf\|_1 \geq -\sum_{i=1}^n \|f_i\|_1 \geq -n\|f\|_1.
\]
 Let $\phi(\delta) = \|T_{e^{-\delta}} f\|_1$. Applying the inequality above to $T_{e^{-\delta}} f$ shows that $\phi'(\delta) \geq -n\phi(\delta)$ and so $(\log \phi(\delta))' \geq -n$. Integrating, we obtain $\phi(\delta)/\phi(0) \geq e^{-\delta n}$. Taking $\delta = -\log \alpha$, we deduce
\[
 \|T_\alpha f\|_1 = \phi(\delta) \geq e^{-\delta n} \phi(0) = \alpha^n \|f\|_1.
\]

\medskip

 We proceed with the proof that $\|T_\alpha f\|_1 \geq \alpha^{d^2} \|f\|_1$ (we thank K.~Oleszkiewicz for help with this proof). Let $V_d$ denote the vector space of all real-valued polynomials $P$ of degree at most~$d$ satisfying $|P(x)| \leq 1$ for all $|x| \leq 1$. Standard results in functional analysis (see for example Rivlin's book~\cite{Rivlin}) show that every linear functional $\Phi\colon V_d \to \RR$ can be represented as
 \[ \Phi \colon  P \mapsto \sum_i c_i P(\alpha_i) \]
 for some points $\alpha_i \in [-1,1]$, in such a way that the maximum of $\Phi P$ over $V_d$ is $\sum_i |c_i|$. Applying this result to the functional $\Phi f\colon f \mapsto f'(1)$, we obtain coefficients $c_i,\alpha_i$ satisfying $\sum_i |c_i| = d^2$ (according to Markov's inequality). Since $\Phi$ maps $x^k$ to $k$ for all $k \leq d$, for such $k$ we have
 \[ \sum_i c_i \alpha_i^k = k. \]
 Since $f$ has degree $d$, this implies that
\[
 \sum_i c_i T_{\alpha_i} f = \sum_{S \subseteq [n]} \hat{f}(S) \chi_S \sum_i c_i \alpha_i^{|S|} = \sum_{S \subseteq [n]} |S| \hat{f}(S) \chi_S = Lf.
\]
 The interpretation of $T_\beta$ as an average shows that $\|T_\beta f\|_1 \leq \|f\|_1 \leq 1$ for every $\beta \in [-1,1]$. In particular,
 \[ \|Lf\|_1 \leq \sum_i \|c_i T_{\alpha_i} f\|_1 \leq \sum_i |c_i| \|f\|_1 = d^2 \|f\|_1. \]
 As in the preceding half of the proof, this implies that $\|T_\alpha f\|_1 \geq \alpha^{d^2} \|f\|_1$.
\end{proof}

Unfortunately, plugging this bound in Lemma~\ref{lem:d32} does not result in any improvement over Theorem~\ref{thm:d2}.

\subsection{Homogeneous functions} \label{sec:homogeneous}

Theorem~\ref{thm:dlogd} shows that the total influence of a homogeneous function of degree $d$ is at most $O(d\log d)$. The argument relies on a result of Harris~\cite{H2} showing that a real polynomial satisfying $|h(\epsilon)| \leq (1+\nobreak|\epsilon|)^d$ for all $\epsilon \in \RR$ also satisfies $|h'(0)| = O(d\log d)$; R\'ev\'esz and Sarantopoulos~\cite{RS} show that the bound on $|h'(0)|$ is tight. Recall that the function $h(\epsilon)$ figures in the proof in the following way. For a certain set $S \subseteq [n]$, we define $g(x,y) = f(\overbrace{x,\ldots,x}^{S},\overbrace{y,\ldots,y}^{\comp{S}})$. Since $f$ is multilinear and homogeneous, $|g(x,y)| \leq \max(|x|,|y|)^d$, and so the function $h(\epsilon) = g(1+\epsilon,1-\epsilon)$ satisfies $|h(\epsilon)| \leq (1+|\epsilon|)^d$.

R\'ev\'esz and Sarantopoulos comment that every real polynomial $h$ satisfying $|h(\epsilon)| \leq (1+|\epsilon|)^d$ can be lifted to a bivariate homogeneous polynomial $g(x,y)$ given by $g(x,y) = y^d h(x/y)$. This polynomial satisfies $|g(x,y)| \leq |y|^d(1 + |x|/|y|)^d = (|x| + |y|)^d$, and so the bound on the derived $|h'(0)|$ can be achieved by some function $g(x,y)$ satisfying $|g(x,y)| \leq (|x| + |y|)^d$. In our case, we have the stronger guarantee $|g(x,y)| \leq \max(|x|,|y|)^d$. We can modify the proof to reflect this stronger guarantee.

\begin{definition} \label{def:homogeneous}
 Let $K_d$ be the supremum of $|h'(1)|$ over all polynomials satisfying $|h(\epsilon)| \leq \max(1,|\epsilon|^d)$ for all $\epsilon \in \RR$.
\end{definition}

\begin{lemma} \label{thm:Kd}
 Let $f\colon \{-1,1\}^n \to [-1,1]$ be a homogeneous function of degree $d$. Then
 \[ \Inf[f] \leq \|\Delta(f)\|_\infty \leq 2K_d. \]
\end{lemma}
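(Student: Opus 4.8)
The plan is to mimic the proof of Theorem~\ref{thm:dlogd}, but to track the stronger pointwise bound $|g(x,y)| \leq \max(|x|,|y|)^d$ all the way into the definition of $K_d$ rather than weakening it to $(1+|\epsilon|)^d$ at the last moment. Fix $\mathbf{x}_0 \in \{-1,1\}^n$ achieving $\|\Delta(f)\|_\infty$; by replacing $f$ with $f(\mathbf{x}_0 \cdot \mathbf{x})$ (which only reindexes Fourier coefficients and preserves degree, homogeneity, and the bound $\|f\|_\infty \leq 1$) we may assume $\mathbf{x}_0 = \mathbf{1}$. As in Theorem~\ref{thm:dlogd}, let $S = \{i : f_i(\mathbf{1}) \geq 0\}$, so that $|\Delta(f)(\mathbf{1})| = \sum_{i \in S} f_i(\mathbf{1}) - \sum_{i \in \comp{S}} f_i(\mathbf{1})$, and set $g(x,y) = f(\overbrace{x,\dots,x}^S, \overbrace{y,\dots,y}^{\comp S})$. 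Since the multilinear extension of $f$ to $[-1,1]^n$ is bounded by $1$ and $f$ is homogeneous, $|g(x,y)| \leq \max(|x|,|y|)^d$ for all real $x,y$.

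Next I would use homogeneity of $g$ to pass to a univariate polynomial. Write $h(\epsilon) = g(\epsilon,1)$; this is a polynomial in $\epsilon$ of degree at most $d$ with $|h(\epsilon)| = |g(\epsilon,1)| \leq \max(|\epsilon|,1)^d = \max(1,|\epsilon|^d)$, so $|h'(1)| \leq K_d$ by Definition~\ref{def:homogeneous}. Homogeneity of $g$ also gives $g(x,y) = y^d g(x/y,1) = y^d h(x/y)$, hence $\partial_x g(x,y)\big|_{(1,1)} = h'(1)$ and $\partial_y g(x,y)\big|_{(1,1)} = d\,h(1) - h'(1)$ (differentiating $y^d h(x/y)$ in $y$ and setting $x=y=1$). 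On the other hand, by the chain rule $\partial_x g(1,1) = \sum_{i \in S} \frac{\partial f}{\partial x_i}(\mathbf{1}) = \sum_{i \in S} f_i(\mathbf{1})$ and $\partial_y g(1,1) = \sum_{i \in \comp S} \frac{\partial f}{\partial x_i}(\mathbf{1}) = \sum_{i \in \comp S} f_i(\mathbf{1})$. Therefore
\[
|\Delta(f)(\mathbf{1})| = \sum_{i \in S} f_i(\mathbf{1}) - \sum_{i \in \comp S} f_i(\mathbf{1}) = h'(1) - \bigl(d\,h(1) - h'(1)\bigr) = 2h'(1) - d\,h(1).
\]
Since $|h(1)| = |g(1,1)| = |f(\mathbf{1})| \leq 1$ and $\Delta(f)(\mathbf{1}) \geq 0$, this is at most $2h'(1) + d \leq 2K_d + d$. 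This would give $\|\Delta(f)\|_\infty \leq 2K_d + d$, which is slightly weaker than the claimed $2K_d$.

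To remove the extra $+d$ and get the clean bound $2K_d$, I would instead apply $K_d$ at a symmetric point. The trick is to note that $\Delta(f)(\mathbf{1})$ should really be compared to the derivative of $g$ along the antisymmetric direction: set $h(\epsilon) = g(1+\epsilon, 1-\epsilon)$ as in Theorem~\ref{thm:dlogd}, so that $h'(0) = \partial_x g(1,1) - \partial_y g(1,1) = |\Delta(f)(\mathbf{1})|$ exactly, and $|h(\epsilon)| \leq \max(|1+\epsilon|,|1-\epsilon|)^d = (1+|\epsilon|)^d$. Now rescale: the polynomial $\tilde h(\epsilon) = h(\epsilon - 1)$ satisfies, substituting $\epsilon \mapsto \epsilon - 1$, $|\tilde h(\epsilon)| \leq (1 + |\epsilon - 1|)^d = \max(|\epsilon|, |2-\epsilon|)^d$, which is not quite of the form $\max(1,|\epsilon|^d)$; so a cleaner normalization is needed. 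Here the homogeneous lift is the right tool: $g$ is homogeneous of degree $d$ with $|g(x,y)| \le \max(|x|,|y|)^d$, and restricting to the line $x = (1+\epsilon)t$, $y = (1-\epsilon)t$ one can instead restrict to $x = \epsilon$, $y = 1$ to get $|g(\epsilon,1)| \le \max(1,|\epsilon|)^d$; the quantity $|\Delta(f)(\mathbf 1)| = |\partial_x g(1,1) - \partial_y g(1,1)|$ is a directional derivative of $g$ at $(1,1)$, and evaluating $g(x,y) = y^d h(x/y)$ with $h(u) = g(u,1)$ gives $\partial_x g(1,1) = h'(1)$, $\partial_y g(1,1) = d h(1) - h'(1)$, so $|\Delta(f)(\mathbf 1)| = |2h'(1) - d h(1)|$ with $|h(1)| \le 1$. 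To close the gap to $2K_d$, observe that $h(1) = f(\mathbf 1)$ and we may apply the same argument at $-\mathbf 1$: whichever of $f(\mathbf 1), f(-\mathbf 1) = (-1)^d f(\mathbf 1)$ has sign opposite to $\Delta(f)$'s contribution makes $-d h(1) \le 0$, giving $|\Delta(f)(\mathbf x_0)| \le 2|h'(1)| \le 2K_d$ at the extremal point. \textbf{The main obstacle} is precisely this bookkeeping: reconciling which directional derivative of the homogeneous bivariate $g$ corresponds to $\Delta(f)$, and choosing the normalization of the associated univariate $h$ so that its constraint is exactly $|h(\epsilon)| \le \max(1,|\epsilon|^d)$ while its relevant derivative is exactly $\tfrac12\Delta(f)(\mathbf x_0)$ — handling the $h(1)$ term correctly (using $|f| \le 1$ and the parity of $f$ under $\mathbf x \mapsto -\mathbf x$) is what produces the sharp constant $2K_d$ rather than $2K_d + O(d)$.
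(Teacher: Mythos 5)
Your first computation is fine as far as it goes, but it only proves $\|\Delta(f)\|_\infty \leq 2K_d + d$, and the device you propose for removing the extra $d$ does not work in general. Writing $h(\epsilon)=g(\epsilon,1)$ and using Euler's identity for the homogeneous bivariate $g$ correctly gives $\Delta(f)(\mathbf{1}) = 2h'(1)-d\,h(1)$ with $h(1)=f(\mathbf{1})$; to drop the $-d\,h(1)$ term you need $f$ to be nonnegative at the point where you run the argument. Your fix is to pass to $-\mathbf{1}$ and exploit $f(-\mathbf{x})=(-1)^d f(\mathbf{x})$, but this only changes the sign of $h(1)$ when $d$ is odd: for even $d$ we have $f(-\mathbf{1})=f(\mathbf{1})$, so if $f(\mathbf{x}_0)<0$ at the extremal point the obstruction is present at both $\pm\mathbf{x}_0$ and your argument stalls at $2K_d+d$. (A correct repair in your framework would be to replace $f$ by $-f$ when $f(\mathbf{x}_0)<0$ — this preserves homogeneity, the bound $\|f\|_\infty\le 1$, and $\Delta(f)$, and flips the sign of $h(1)$ — but that is not what you wrote.)

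The paper avoids this bookkeeping entirely: it never invokes Euler's identity to express $\partial_y g(1,1)$ through the single restriction $h(\epsilon)=g(\epsilon,1)$. Instead it uses \emph{two} restrictions, $h_1(\epsilon)=g(\epsilon,1)$ and $h_2(\epsilon)=g(1,\epsilon)$, each satisfying $|h_i(\epsilon)|\le\max(1,|\epsilon|)^d$ and hence $|h_i'(1)|\le K_d$, and observes that
\[
|\Delta(f)(\mathbf{1})| \;=\; \sum_{i \in S} f_i(\mathbf{1}) - \sum_{i \in \comp{S}} f_i(\mathbf{1}) \;=\; h_1'(1)-h_2'(1) \;\le\; 2K_d ,
\]
with no $h(1)$ term ever appearing. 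So the intended structure (restrict $g$ so that the constraint is exactly $\max(1,|\epsilon|^d)$ and apply the definition of $K_d$ at $\epsilon=1$) is the same as yours, but the missing idea is to bound the two directional derivatives $\partial_x g(1,1)$ and $\partial_y g(1,1)$ by $K_d$ \emph{separately}, rather than tying them together through homogeneity.
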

\begin{proof}
 Obviously $\Inf[f] = \|\Delta(f)\|_1 \leq \|\Delta(f)\|_\infty$. It remains to verify that $|\Delta(f)(x)| \le 2K_d$ for every $x \in \{-1,1\}^n$. Note that without loss of generality we only need to prove this for $x=\mathbf{1}$. Let $S$ be the set of $i \in [n]$ such that $f_i(\mathbf{1}) \geq 0$. Then
 \[
  |\Delta(f)(\mathbf{1})| = \sum_{i \in S} f_i(\mathbf{1}) - \sum_{i \in \comp{S}} f_i(\mathbf{1}).
 \]
 Define the bivariate polynomial $g(x,y) = f(\overbrace{x,\ldots,x}^{S},\overbrace{y,\ldots,y}^{\comp{S}})$. Since $f$ is multilinear and homogeneous, $|g(x,y)| \leq \max(|x|,|y|)^d$. In particular, the functions $h_1(\epsilon) = g(\epsilon,1)$ and $h_2(\epsilon) = g(1,\epsilon)$ are polynomials satisfying $|h_i(\epsilon)| \leq \max(1,|\epsilon|)^d$ for $i=1,2$. By definition, $|h_i'(1)| \leq K_d$ for $i=1,2$. Thus
\[
|\Delta(f)(\mathbf{1})| = \sum_{i \in S} f_i(\mathbf{1}) - \sum_{i \in \comp{S}} f_i(\mathbf{1}) =
\sum_{i \in S} \frac{\partial f}{\partial x_i} (\mathbf{1}) - \sum_{i \in \bar{S}}
\frac{\partial f}{\partial x_i}(\mathbf{1})  =h'_1(1)-h'_2(1)\leq 2K_d. \qedhere
\]
\end{proof}

Harris~\cite{H3,H4} develops systematically a method aimed toward computing constants like $K_d$ using Lagrange interpolation. R\'ev\'esz and Sarantopoulos~\cite{RS} present a different framework which employs potential theory. We believe that these methods can be used to estimate $K_d$ asymptotically. We conjecture that $K_d = \Theta(d)$, leading to a proof that $\Inf[f] \leq O(d)$ for homogeneous functions.

We have computed $K_1 = 1$ and $K_2 = 1 + \sqrt{2}$. The bound for $d = 1$ is attained for $h(x) = \pm x$. For $d = 2$, it is attained for
\[ h(x) = \left(\frac{1}{2} + \frac{1}{2\sqrt{2}}\right) (x^2 - 1) + \frac{1}{\sqrt{2}} x. \]
The upper bound $K_1 \leq 1$ is trivial. The upper bound $K_2 \leq 1 + \sqrt{2}$ follows by Lagrange interpolation (following Harris) with the points $1 \pm \sqrt{2}$ (a priori, the method requires three points, but the third point cancels out in the calculation).

\bibliographystyle{alpha}
\bibliography{L1}

\end{document}